\def\spacingset#1{\renewcommand{\baselinestretch}%
{#1}\small\normalsize} \spacingset{1}
\newtheorem{prop}{Proposition}
\newtheoremstyle{examplestyle}
  {3pt}
  {4pt}
  {\itshape}
  {0pt}
  {\bfseries}
  {.}
  { }
  {\thmname{#1}\thmnumber{ #2}\thmnote{ (#3)}}
\theoremstyle{examplestyle}
\newtheorem{example}{Example}
\newcounter{algorithm}
\newenvironment{algorithm}[1][]{\refstepcounter{algorithm}\par\medskip\noindent%
   \textbf{Algorithm~\thealgorithm: #1} \rmfamily}{\medskip}
\newcommand{\bv}{\mathbf{v}}
\newcommand{\bx}{\mathbf{x}}
\newcommand{\by}{\mathbf{y}}
\newcommand{\bw}{\mathbf{w}}
\newcommand{\bz}{\mathbf{z}}
\newcommand{\bI}{\mathbf{I}}
\newcommand{\bH}{\mathbf{H}}
\newcommand{\bK}{\mathbf{K}}
\newcommand{\bQ}{\mathbf{Q}}
\newcommand{\bC}{\mathbf{C}}
\newcommand{\bM}{\mathbf{M}}
\newcommand{\bR}{\mathbf{R}}
\newcommand{\bfzero}{\mathbf{0}}
\newcommand{\bfone}{\mathbf{1}}
\newcommand{\bfgamma}{\bm{\gamma}}
\newcommand{\bfmu}{\bm{\mu}}
\newcommand{\bftheta}{\bm{\theta}}
\newcommand{\bfnu}{\bm{\nu}}
\newcommand{\bfSigma}{\bm{\Sigma}}
\newcommand{\bfLambda}{\bm{\Lambda}}
\DeclareMathOperator{\var}{var}
\DeclareMathOperator{\diag}{diag}
\newcommand{\evol}{\mathcal{M}}
\newcommand{\levol}{\mathbf{M}}
\renewcommand{\lor}{\textnormal{Lor}}
\newcommand{\matern}{\textnormal{Mat}}
\newcommand{\obs}{\mathcal{H}}
\newcommand{\lobs}{\mathbf{H}}
\newcommand{\normal}{\mathcal{N}}
\newcommand{\ig}{\mathcal{IG}}
\newcommand{\update}{\mathcal{U}}
\newcommand{\likelihood}{\mathcal{L}}
\newcommand{\elik}{\mathcal{L}^{\textnormal{E}}}
\newcommand{\plik}{\mathcal{L}^{\textnormal{P}}}
\newcommand{\mlik}{\mathcal{L}^{\textnormal{Z}}}
\newcommand{\forex}{\mathbf{\widetilde{x}}}
\DeclareMathOperator*{\argmax}{arg\,max}
\def\picdir{plots}
\title{\textcolor{black}{Ensemble Kalman methods for high-dimensional hierarchical dynamic space-time models}}
\author{Matthias Katzfuss\thanks{Texas A\&M University (katzfuss@tamu.edu)} \and Jonathan R.\ Stroud\thanks{Georgetown University} \and Christopher K.\ Wikle\thanks{University of Missouri}}
\date{}
\begin{document}

\maketitle

\begin{abstract}

\textcolor{black}{We propose a new class of filtering and smoothing methods for inference in high-dimensional, nonlinear, non-Gaussian, spatio-temporal state-space models. The main idea is to combine the ensemble Kalman filter and smoother, developed in the geophysics literature, with state-space algorithms from the statistics literature. Our algorithms address a variety of estimation scenarios, including on-line and off-line state and parameter estimation. We take a Bayesian perspective, for which the goal is to generate samples from the joint posterior distribution of states and parameters. The key benefit of our approach is the use of ensemble Kalman methods for dimension reduction, which allows inference for high-dimensional state vectors.  We compare our methods to existing ones, including ensemble Kalman filters, particle filters, and particle MCMC. Using a real data example of cloud motion and data simulated under a number of nonlinear and non-Gaussian scenarios, we show that our approaches outperform these existing methods.}

\end{abstract}

\noindent%
{\it Keywords:} data assimilation; geoscience applications; Gibbs sampler; particle filter; spatio-temporal statistics; state-space models
\vfill

\newpage

\section{Introduction \label{sec:intro}}

\subsection{Inference in state-space models}

\textcolor{black}{
State-space models (SSMs) are general representations of systems observed over time. A SSM describes the temporal evolution of the system state, and the relationship of the state to observations \textcolor{black}{\citep[e.g.,][]{WestHarr:97,Shum:Stof:06}}.  Here, we analyze spatio-temporal SSMs describing the evolution of one or more discretized spatial fields over time.  These models arise, for example, in geoscience and biomedical applications.  They are often nonlinear, non-Gaussian, and have very high-dimensional states.  Standard statistical approaches such as Markov chain Monte Carlo (MCMC) and particle filters break down for this class of models.  Thus, new methods are needed for accurate inference.  We propose a number of new filtering and smoothing algorithms for Bayesian inference in these high-dimensional SSMs.}


\textcolor{black}{
There is an enormous statistics and engineering literature on state and parameter estimation for low-dimensional SSMs.  For linear, Gaussian models, the Kalman filter \citep{Kalman1960} provides closed-form posterior distributions for the states conditional on the parameters. For conditionally Gaussian models, inference is often carried out using \emph{MCMC methods} based on the forward-filter backward-sampling algorithm \citep{Carter1994,Fruhwirth1994} or the mixture Kalman filter \citep{Chen2000}.  MCMC methods have also been developed for more general SSMs, including nonlinear, non-Gaussian models \citep{Carlin1992} and non-Gaussian measurement models \citep{Shephard1997,Gamerman1998}. However, these methods do not scale to high-dimensional states because their computational cost is cubic in the state dimension.}



\textcolor{black}{
Another popular approach for nonlinear, non-Gaussian SSMs is sequential Monte Carlo (SMC) methods, also known as particle filters \cite[e.g.,][]{Doucet2001}.  SMC methods approximate the state distribution at each time by a weighted set of samples or {\em particles}.  These particles are propagated forward through time according to the state evolution.  They are then updated based on new data using a reweighting or resampling step \textcolor{black}{\citep[e.g.,][]{Gordon1993,Liu1998}}.  Although these methods are exact with an infinite number of particles, it is well known that they suffer from weight degeneracy (i.e., all weights but one become essentially zero) in high-dimensional problems \citep[e.g.,][]{Snyd:Beng:Bick:08}.  Thus, particle filters do not scale to high dimensions.}



\textcolor{black}{
In contrast to the statistical approaches above, approximate state-estimation methods in the geoscience literature scale well to high-dimensional problems.  In this paper, we focus on a geophysical data assimilation method called the \emph{ensemble Kalman filter} \citep[EnKF;][]{Even:94}; see \cite{Houtekamer2016} and \cite{Katzfuss2015b} for recent reviews, and \citet{Wikle2007} for a general introduction to data assimilation. The EnKF also uses an ensemble representation that is propagated forward like the particle filter, but instead of reweighting, the EnKF updates the ensemble using a linear ``shift'' that approximates the best linear update. While the EnKF is based on the assumption of a linear, Gaussian model, it provides accurate inference for many nonlinear models as well.  This has allowed the EnKF to be successfully applied to nonlinear spatio-temporal SSMs in millions of dimensions \citep[e.g.,][]{Szunyogh2008,Whitaker2008,Houtekamer2014}. 
}

\textcolor{black}{
The EnKF is also used for parameter estimation and non-Gaussian data assimilation. The most common EnKF method for parameter estimation is {\em state augmentation} \citep{Ande:01}, which refers to simply including the parameters in the state vector.  While EnKF with state augmentation often works well for certain parameters, such as autoregressive coefficients, the resulting linear update can be highly problematic for several types of parameters, most notably, variance and covariance parameters \citep{Stro:Beng:07,DelSole2010}. The linear shifting update in the EnKF can also be highly suboptimal for non-Gaussian observations.  Alternative non-Gaussian EnKF approaches, including quantile-based filters \citep{Anderson2010} and moment-matching filters \citep{Lei:Bick:11}, still perform poorly for certain classes of measurement distributions.   Thus, new EnKF methods are needed for more general state-space models with unknown parameters and non-Gaussian measurements.
}


\subsection{Hierarchical state-space models \label{sec:hssmintro}}

\textcolor{black}{
Here, we develop new methods to analyze a general class of high-dimensional \emph{hierarchical state-space models} \citep[HSSMs; e.g.,][]{Cressie2011}.   These models \textcolor{black}{consist of four levels, adding} two levels to a standard Gaussian SSM: a \emph{transformation level}, which allows for non-Gaussian observations, and a \emph{parameter level}, which models the unknown parameters, \textcolor{black}{assumed to be random}. This class of HSSMs is very general and can be used to describe most dynamic spatio-temporal systems observed sequentially over time. It allows for unknown time-varying parameters in any part of the model, non-Gaussian and nonlinear observations, mixture models, and even higher-order Markov models by extending the state-space.}

More specifically, for discrete time points $t=1,2,\ldots,T$, our HSSM is given by:
\begin{align}
\textcolor{black}{\mbox{Transformation:}} ~~~~~~ \bz_t | \by_t, \bftheta_t & \sim f_t(\bz_t|\by_t,\bftheta_t), \label{eq:transmodel}\\
\textcolor{black}{\mbox{Observation:}} ~~~~~~ \by_t | \bx_t, \bftheta_t & \sim \normal_{m_t}(\bH_t\bx_t,\bR_t),  \label{eq:obsmodel}\\
\textcolor{black}{\mbox{Evolution:}} ~~~ \bx_t | \bx_{t-1}, \bftheta_t & \sim \normal_n(\forex_t \! = \! \evol_t(\bx_{t-1}),\bQ_t), \label{eq:evomodel}\\
\textcolor{black}{\mbox{Parameter:}} ~~~~~~~ \bftheta_t | \bftheta_{t-1} & \sim \textcolor{black}{p}(\bftheta_t| \bftheta_{t-1}), \label{eq:paramodel}
\end{align}
\textcolor{black}{
where $\bz_t$ is the $m_t$-dimensional measurement vector at time $t$, $\by_t$ is a corresponding $m_t$-dimensional ``latent observation'' vector, $\bx_t$ is the $n$-dimensional unobserved state vector, $\bftheta_t$ is the parameter vector, $f_t$ is a known measurement or transformation distribution, $\evol_t$ is the state evolution operator, and $\normal_n(\cdot,\cdot)$ denotes the $n$-variate normal distribution.  We assume that $\bz_t$ and $\by_t$ are serially independent given the states and parameters.  The model is completed with initial priors $\bx_0 \sim \normal_n(\bfmu_{0|0},\bfSigma_{0|0})$ and $\bftheta_0 \sim \textcolor{black}{p}(\bftheta_0)$.}

\textcolor{black}{
The state vector $\bx_t$ consists of one or more discretized spatial fields, along with any parameters that can be estimated using state augmentation.  Conversely, the parameter vector $\bftheta_t$ contains any unknown quantities that {\em cannot} be handled well by state augmentation.  The quantities $\bH_t$, $\evol_t$, $\bR_t$, $\bQ_t$, and $f_t$ may all depend on the parameter vector $\bftheta_t$. Sometimes we make the dependence explicit by writing $\bH_t(\bftheta_t)$ and so forth.  Further, we view the evolution operator $\evol_t$ as a ``black box" that might be nonlinear and expensive to evaluate or unavailable in closed form. We focus on models where $n$ or $m_t$ are very large, \textcolor{black}{usually $n > m_t$.
Sometimes, we observe $\bz_t=\by_t$ so that $f(\bz_t|\by_t,\bftheta_t)=\delta_{\by_t}(\bz_t)$ where $\delta(\cdot)$ is the Dirac delta function.}}

\textcolor{black}{The HSSM \eqref{eq:transmodel}--\eqref{eq:paramodel} is \textcolor{black}{very} flexible and includes many important SSMs as special cases.  If $\by_t$ and $\bftheta_t$ are known, the model reduces to a standard Gaussian SSM given by (2)--(3) \textcolor{black}{\cite[e.g.,][]{Shum:Stof:06}}.  If the evolution is linear with $\evol_t(\bx_{t-1})=\bM_t\bx_{t-1}$, \textcolor{black}{this} becomes a {\em conditionally Gaussian SSM} \textcolor{black}{\cite[e.g.,][]{Carter1994,Chen2000}}, where the full conditional distributions for the states are Gaussian. The HSSM also allows for non-Gaussian and nonlinear observations through $f_t$ in \eqref{eq:transmodel}. This could include exponential family \textcolor{black}{models \cite[e.g.,][]{Shephard1997,Gamerman1998}}, nonlinear \textcolor{black}{mean functions \cite[e.g.][]{Carlin1992}}, or threshold models for \textcolor{black}{discrete data or discrete-continuous mixtures \cite[e.g.,][]{SansoGuenni1999}}.  Finally, the HSSM allows for \textcolor{black}{unknown static or time-varying parameters} at any level of the model.   
}

\textcolor{black}{Note that \textcolor{black}{it may be} possible to reduce our HSSM to a standard nonlinear, non-Gaussian SSM by integrating out $\by_t$ and defining $(\bx_t,\bftheta_t)$ as the state vector. \textcolor{black}{However, we} prefer the hierarchical formulation in \eqref{eq:transmodel}--\eqref{eq:paramodel}, because it facilitates model building and computation. Specifically, by conditioning on $\by_t$ and $\bftheta_t$, we will use the EnKF for inference on the high-dimensional states.}


{\color{black}
\subsubsection{Examples \label{sec:allexamples}}

We now present some examples of the HSSM in \eqref{eq:transmodel}--\eqref{eq:paramodel} that will be used for numerical comparisons in Section \ref{sec:examples}.

\begin{example}[Heavy-tailed-data model]\label{ex:tdist}
The distribution of $\by_t|\bx_t$ belongs to the class of normal-scale mixtures \citep{Andrews1974,West1987} if we assume that $\bR_t$ is diagonal with independent random parameters $\bftheta_t$ on the diagonal. An important member of this class is the $t$-distribution, which is similar to a normal distribution but has heavier tails and is thus more robust to outliers. 
The following HSSM assumes that the observation noise is independently $t$-distributed with $\kappa$ degrees of freedom:
\begin{align*}
\bz_t | \by_t & \sim \delta_{\by_t}(\bz_t),\\
\by_t | \bx_t, \bftheta_t & \sim \normal_{m_t}(\bH_t\bx_t,\bR_t(\bftheta_t)),\\
\bx_t | \bx_{t-1} & \sim \normal_n(\evol_t(\bx_{t-1}), \bQ_t),\\
\theta_{t,l} & \stackrel{iid}{\sim} \ig(\kappa/2,\kappa/2), \quad l=1,\ldots,m_t,
\end{align*}
where $\bR_t(\bftheta_t) = \sigma^2_t \diag(\theta_{t,1},\ldots,\theta_{t,m_t})$, $\sigma^2_t$ is known, and $\bftheta_t=(\theta_{t,1},\ldots,\theta_{t,m_t})'$ follow independent inverse-gamma distributions.  The transformation level is a point mass such that $\bz_t=\by_t$. 
\end{example}


Next, we consider non-continuous observations, such as binary data and right-skewed data with a point mass at zero (e.g., hourly rainfall amounts).
The EnKF and its transformation- or quantile-based extensions \citep[e.g.,][]{Anderson2009,Amezcua2014} are particularly poorly suited for assimilation of  such observations. In our HSSM framework, this setting is handled using threshold observation models, in which $f_t$ is determined by transformations of the form $z_{t,l} = g_t(y_{t,l}; \bftheta_t)$, $l=1,\ldots,m$, where $g_t(\cdot)$ is a deterministic function containing an indicator function:
\begin{example}[Threshold observation model for rainfall] \label{ex:rainfall} 
The observed data are modeled as $z_{t,l} = g(y_{t,l}; \bftheta_t) = y_{t,l}^{\theta_t} \mathbbm{1}_{\{y_{t,l} > 0\}},$ where $\mathbbm{1}$ is the indicator function.  Then $z_{t,l}$ follows a rainfall-type distribution, i.e., a mixture of a positive, right-skewed distribution and a point mass at zero, for some $\theta_t>1$ that determines the skewness \citep[e.g.,][]{SansoGuenni1999,Sigrist2011}. The HSSM is given as:
\begin{align*}
z_{t,l} | \by_t, \theta_t & = y_{t,l}^{\theta_t} \mathbbm{1}_{\{y_{t,l}>0\}}, \quad l=1,\ldots,m_t,\\
\by_t | \bx_t & \sim \normal_{m_t}(\bH_t\bx_t,\sigma^2_t\bI_{m_t}),\\
\bx_t | \bx_{t-1} & \sim \normal_n(\evol_t(\bx_{t-1}), \bQ_t),\\
\theta_t & \sim p(\theta_t),
\end{align*}
where the parameter $\theta_t$ is either known or assigned a prior distribution.
\end{example}
The threshold framework can also be used for \emph{binary} data following a Bernoulli distribution with $z_{t,l} | \bx_t, \bftheta_t \sim \mathcal{B}ern\left(\Phi((\bH_t \bx_t)_l/\sigma_t)\right)$, where $\Phi(\cdot)$ is the standard normal CDF. This is called a probit model for $\sigma_t=1$.  The transformation function is $z_{t,l} = g(y_{t,l}; \bftheta_t) = \mathbbm{1}_{\{y_{t,l} > 0\}}$, which corresponds to the rainfall model with parameter $\theta_t=0$.

\begin{example}[Dynamic Poisson model]\label{ex:poisson}
For spatio-temporal count data, assume $f_t(\bz_t|\by_t,\bftheta_t) = \prod_{l=1}^{m_t} \mathcal{P}ois(z_{t,l}|e^{y_{t,l}})$ and $\bR_t = \sigma^2_{t} \bI_{m_t}$, where $\sigma^2_t$ accounts for overdispersion \citep{Wikl:02}.
The state follows a linear evolution, and the model includes six unknown time-varying parameters:
\begin{align*}
z_{t,l} | \by_t & \sim \mathcal{P}ois(\exp(y_{t,l})), \quad l=1,\ldots,m_t,\\
\by_t | \bx_t, \bftheta_t & \sim \normal_{m_t}(\bH_t\bx_t,\sigma^2_t\bI_{m_t}),\\
\bx_t | \bx_{t-1},\bftheta_t & \sim \normal_n(\bM(\bfgamma_t)\bx_{t-1},\bQ(\tau^2_t,\lambda_t)),\\
\bftheta_t |\bftheta_{t-1} & \sim \normal_6(\bftheta_{t-1},.05^2\bI_6),
\end{align*}
where the transition matrix $\bM(\bfgamma_t)$ is tridiagonal with parameters $\bfgamma_t=(\gamma_{1,t},\gamma_{2,t},\gamma_{3,t})$ and $\bQ(\tau^2_t,\lambda_t)$ is a spatial covariance matrix with variance $\tau^2_t$ and range $\lambda_t$. The unknown time-varying parameters $\bftheta_t=(\gamma_{1,t},\gamma_{2,t},\gamma_{3,t},\allowbreak\log \sigma_t,\allowbreak \log \tau_t, \log \lambda_t)'$ follow a random-walk evolution. 
\end{example}

\begin{example}[Nonlinear evolution model with unknown static parameter] \label{ex:lorenz}
The Lorenz-96 model is widely used in atmospheric science as an example of a nonlinear evolution operator 
with chaotic behavior \citep{Lore:96}.  Lorenz-96 mimics advection at $n=40$ equally-spaced locations along a latitude circle on the globe. 
The time evolution is determined by 
\begin{equation}
\label{eq:lorenz}
\textstyle\frac{d x_{t,i}}{dt} = (x_{t,i+1}-x_{t,i-2})x_{t,i-1}-x_{t,i} + F, \qquad i=1,\ldots,40,
\end{equation}
with cyclic boundary conditions (i.e., $x_{t,38}=x_{t,-2}$, $x_{t,39}=x_{t,-1}$, $x_{t,41}=x_{t,1}$). Let $\lor_{8, 0.2}$ denote the evolution operator implied by \eqref{eq:lorenz} with forcing parameter $F=8$ and time step \textcolor{black}{$\Delta t=0.2$,} which leads to strong nonlinearities and non-Gaussian forecast distributions.  We observe data $\bz_t=\by_t$ at each location so that $m=n=40$.  The HSSM model is:
\begin{align*}
\bz_t | \by_t & \sim \delta_{\by_t}(\bz_t),\\
\by_t | \bx_t & \sim \normal_n(\bx_t, \sigma^2\bI_n),\\
\bx_t | \bx_{t-1},\theta & \sim \normal_n(\evol(\bx_{t-1};\theta),\bQ),\\
\theta & \sim \normal(\mu_\theta,\sigma^2_\theta),
\end{align*}
where $\sigma^2$ is known, $\evol_t(\bx_{t-1};\theta) = \theta \, \lor_{8,0.2}(\bx_{t-1})$ is the nonlinear evolution operator, which is unavailable in closed form, $\theta$ is an unknown parameter, and $\bQ$ is a known covariance matrix.
\end{example}
}

\subsection{Extended ensemble Kalman \textcolor{black}{techniques}}

\textcolor{black}{
\textcolor{black}{The goal in this paper is to make Bayesian inference on the states $\bx_t$ and parameters $\bftheta_t$ for HSSMs \eqref{eq:transmodel}--\eqref{eq:paramodel}.} \textcolor{black}{Let $\bz_{1:t}=\{\bz_1,\ldots,\bz_t\}$ denote the data up to time $t$, and use similar notation for $\bx, \by$ and $\bftheta$.}  Let \textcolor{black}{$p(x|z)$} denote the conditional distribution of some random variable $x$ given $z$.   We consider two types of inference.  {\em Filtering} involves computing the filtering distribution \textcolor{black}{$p(\bx_t,\bftheta_t|\bz_{1:t})$} sequentially for each time $t=1,2,\ldots$. {\em Smoothing} requires calculating the joint posterior (i.e., smoothing) distribution \textcolor{black}{$p(\bx_{1:T},\bftheta_{1:T}|\bz_{1:T})$} for fixed time $T$.}




We propose a \emph{simple general idea} for such inference in \textcolor{black}{high-dimensional HSSMs}: Take \textcolor{black}{an} existing Bayesian inference technique (e.g., a particle filter or a Gibbs sampler), but approximate the part of the algorithm that requires sampling from or integrating out the high-dimensional state vector $\bx_t$ with an EnKF or with the related ensemble Kalman smoother (EnKS). This leads to a general class of \emph{extended ensemble Kalman filters and smoothers}, which can be applied to \textcolor{black}{high-dimensional} HSSMs. The key is that when conditioning on $\by_t$ and $\bftheta_t$, our HSSM reduces to a standard Gaussian SSM with potentially nonlinear evolution, for which the EnKF is often well suited.  Table~\ref{tab:methods} summarizes these algorithms and puts them in the context of existing particle approaches or approaches for linear Gaussian SSMs.
 


{\color{black}
Our proposed filtering and smoothing algorithms can be classified \textcolor{black}{into {\em Ensemble Gibbs} and {\em Ensemble Marginalization} methods}.   For \textcolor{black}{Ensemble} Gibbs schemes, the EnKF update is used within a Gibbs sampler to draw from the full conditional distribution of $\bx$ given $\bftheta$ and $\by$. For \textcolor{black}{Ensemble Marginalization} schemes, the EnKF is used to integrate out {\em and} sample from the posterior distribution of the states $\bx$ within a particle filter for the parameters $\bftheta$, similar to a mixture KF \citep{Chen2000}. Further connections to existing methods will be provided with each of our proposed algorithms.}

Our new extended \textcolor{black}{EnK techniques} can be viewed from two perspectives. \textcolor{black}{First, these methods greatly expand the class of problems to which the EnKF and EnKS can be applied.  Second, our techniques allow for accurate, approximate inference in higher-dimensional models than standard statistical methods such as particle filters and Gibbs samplers.}


\textcolor{black}{In the EnKF literature, several approaches exist for estimating specific tuning parameters, such as inflation and localization parameters 
\citep[e.g.,][]{WangBish:03,Ande:07a}. Throughout, we assume that these tuning parameters are known, and we focus instead on inference of model parameters that explicitly appear in the model \eqref{eq:transmodel}--\eqref{eq:paramodel}.} \textcolor{black}{Some approaches for state and parameter estimation in highly nonlinear state-space models \citep{Gu2007,Chen2012,Bocquet2013} generate approximate posterior samples from the filtering or smoothing distribution using iterative optimization approaches. Specifically, the ensemble randomized maximum likelihood method \citep{Chen2012} recasts posterior simulation as an optimization problem, and uses optimization methods with successive linearization of the model to find the posterior mode. These methods are quite general and can be used for parameter estimation and smoothing inference via state augmentation. 
}

As the EnKF is an approximate technique for inference on the state vector, our extended \textcolor{black}{EnK techniques} also \textcolor{black}{provide approximate inference}. However, in many complex problems, exact inference is impossible, and approximate techniques such as approximate Bayesian computation \citep[ABC;][]{Sisson2007,Marin2012}, synthetic likelihood \citep[e.g.,][]{Wood2010} and integrated nested Laplace approximations \citep[INLA;][]{Rue2009,Lindgren2011a} have generally gained in popularity.  While so-called exact approximate approaches \citep[e.g.,][]{Andrieu2010,Johansen2012} result in exact inference given infinite computational resources, we will show here numerically that, in finite computation time, our extended \textcolor{black}{EnK techniques} can outperform exact approximate \textcolor{black}{methods} even for moderately high dimensions.

\begin{table}
\centering
\small
\begin{tabular}{p{.17\textwidth}| p{.11\textwidth} | p{.2\textwidth} | p{.18\textwidth} | p{.22\textwidth}}
 & & \multicolumn{3}{c}{\large Filtering/smoothing methods}\\
Target distrib. & Dim.\ of $\bftheta_t$ & \textcolor{black}{Cond.\ }Gaussian    & Particle & Ensemble \\
\hline \hline
\textcolor{black}{$p(\bx_t|\bftheta_{1:t},\by_{1:t})$}, $\forall t$  & ($\bftheta$ known) & Kalman filter  &  Particle filter   & EnKF (Alg.~\ref{alg:enkf})\\
\hline
\textcolor{black}{$p(\bx_{1:T}|\bftheta_{1:T},\by_{1:T})$}        & ($\bftheta$ known)  & \textcolor{black}{FFBS}  & Particle smoother & EnKS (Alg.~\ref{alg:enks})  \\
\hline
\multirow{2}{*}{
\textcolor{black}{$p(\bx_t,\bftheta_t|\bz_{1:t})$}, $\forall t$} & high &  RWFFBS & Online MCMC & \textbf{GEnKF (Alg.~\ref{alg:genkf})}  \\
& low & RBPF & EARBPF &  \textbf{PEnKF (Alg.~\ref{alg:penkf})} \\
 \hline
\multirow{3}{*}{\textcolor{black}{$p(\bx_{1:T},\bftheta_{1:T}|\bz_{1:T})$}}  
& high & Gibbs with FFBS  & Particle Gibbs  & \textbf{GEnKS (Alg.~\ref{alg:genks})} \\
& low & MH with KF  & PMMH    & \textbf{MHEnKS}/ \\
&       &                      &   &  \textbf{PEnKS (Alg.~\ref{alg:penks})}
 \end{tabular}
\caption{\label{tab:methods} 
Existing and our new (in bold) methods for filtering and smoothing tasks for known and unknown parameters. The algorithms for high-dimensional $\bftheta_t$ generally require its full-conditional distribution to be available in closed form. ``\textcolor{black}{Cond.\ Gaussian'' refers to methods where the full conditional distribution for the states is Gaussian}. The methods in the column ``Particle'' do not scale to large state dimension $n$. Abbreviations and main references: KF = Kalman filter \citep{Kalman1960}; particle filter \citep{Gordon1993}; EnKF = ensemble Kalman filter \citep{Even:94}; FFBS = forward-filtering backward-sampling \citep{Carter1994}; particle smoother \citep{Kita:96}; EnKS = ensemble Kalman smoother \citep{Even:vanL:00}; RWFFBS = rolling-window FFBS \citep{polson2008practical}; Online MCMC \citep{berzuini1997dynamic}; RBPF = Rao-Blackwellized PF \citep{Doucet2000}; EARBPF = exact approximate RBPF \citep{Johansen2012}; particle Gibbs \citep{Andrieu2010}; MH with KF = Metropolis-Hastings based on KF likelihood \citep{schweppe1965evaluation}; PMMH = particle marginal Metropolis-Hastings \citep{Andrieu2010}. } 
\end{table}

\subsection{Overview and contributions}

This article is organized as follows. In Section \ref{sec:enkf}, we review some existing work on the EnKF and related issues from a statistical perspective. In Section \ref{sec:preliminaries}, we provide preliminaries to lay the foundation for the development of our extended \textcolor{black}{EnK techniques}, including new insights on likelihood approximations in high-dimensional SSMs. In Section \ref{sec:methods}, we derive and present our extended ensemble Kalman filters and smoothers. In Section \ref{sec:examples}, \textcolor{black}{we compare our methods to existing approaches using a number of simulated and real data examples.  We find that our methods strongly outperform these existing approaches.}  Conclusions are given in Section \ref{sec:conclusions}.


{\color{black}
\section{Review of ensemble Kalman filter and smoother\label{sec:enkf}}

In this section, we briefly review the ensemble Kalman filter (EnKF) and its extensions.  The EnKF is a sequential Monte Carlo algorithm that generates samples from the state filtering distribution for the SSM (2)-(3), where $\by_{1:t}$ and $\bftheta_{1:t}$ are known. Thus, throughout this section, we assume $\by_{1:t}$ and $\bftheta_{1:t}$ are known and we often suppress $\bftheta_{1:t}$ from the conditioning set.  Further, let $\normal_n(\bx|\bfmu,\bfSigma)$ denote an $n$-variate normal density with mean $\bfmu$ and covariance matrix $\bfSigma$, evaluated at $\bx$.
}


As mentioned in Section \ref{sec:hssmintro}, we assume here that the evolution operator $\evol_t$ is intractable or computationally expensive, and so all we are able to do with $\evol_t$ is to ``apply'' it to a finite and typically rather small number of state vectors. Thus, sequential inference over time must be based on an \emph{ensemble} of state vectors that represents (i.e., approximates) the relevant distributions at each time point.

More specifically, assume that the \emph{filtering ensemble} at time $t-1$ is given by $\bx_{t-1|t-1}^{(1:N)} \colonequals \{\bx_{t-1|t-1}^{(1)},\ldots,\bx_{t-1|t-1}^{(N)}\}$, meaning that the filtering distribution at time $t-1$ is approximated as 
\begin{equation}
\label{eq:priornopar}
\textcolor{black}{p(\bx_{t-1} | \by_{1:t-1})} = \textstyle\frac{1}{N} \textstyle\sum_{j=1}^N \delta_{(\bx_{t-1|t-1}^{(j)})}(\bx_{t-1}),
\end{equation}
where $\delta$ is the Dirac delta (point mass) function. The \emph{forecast ensemble} {\color{black}$\forex_{t|t-1}^{(1:N)} = \{\forex_{t|t-1}^{(1)},\ldots,\forex_{t|t-1}^{(N)}\}$} is obtained by applying {\color{black}$\evol_t$} to each member of the filtering ensemble:
\begin{equation}
\label{eq:enkfforecast}
\forex_{t|t-1}^{(j)} = \evol_t(\bx_{t-1|t-1}^{(j)}), \qquad j=1,\ldots,N.
\end{equation}

\subsection{Updating the state \label{sec:enkfreview}}

Based on new data $\by_t$ observed at time $t$, we would like to 
{\color{black} update the forecast ensemble in \eqref{eq:enkfforecast} to obtain a filtering ensemble from the filtering distribution of the state vector, \textcolor{black}{$p(\bx_t|\by_{1:t})$}.}

\subsubsection{{\color{black}Update with a mixture prior}}

Based on \eqref{eq:priornopar}, the prior distribution at time $t$ would in principle be given by
\begin{equation}
\label{eq:forecast1}
\textcolor{black}{\textstyle p(\bx_t | \by_{1:t-1}) = \int p(\bx_t|\bx_{t-1}) dP(\bx_{t-1}|\by_{1:t-1}) = \textstyle\frac{1}{N} \sum_{j=1}^N \normal_n(\bx_t|\forex_{t|t-1}^{(j)},\bQ_t),}
\end{equation}
where the integral is with respect to the measure induced by the distribution of $\bx_{t-1}|\by_{1:t-1}$. However, based on the mixture prior distribution in \eqref{eq:forecast1}, the filtering distribution updated using the new observation $\by_t$ would have the form
\begin{align}
\textcolor{black}{p(\bx_t | \by_{1:t})} & \propto \textcolor{black}{p(\by_t|\bx_t)p(\bx_t|\by_{1:t-1})} = \textstyle\frac{1}{N} \sum_{j=1}^N \normal_{m_t}(\by_t|\bH_t\bx_t,\bR_t)\textcolor{black}{\normal_n(\bx_t|\forex_{t|t-1}^{(j)},\bQ_t)} \nonumber\\
& \propto \textstyle\sum_{j=1}^N \alpha_t^{(j)} \textcolor{black}{p(\bx_t|\by_t,\bx^{(j)}_{t-1|t-1})} \label{eq:naivefilter},
\end{align}
with the filtering or posterior weights 
\begin{equation}
\label{eq:pfweights}
\alpha_t^{(j)} = \textcolor{black}{p(\bx^{(j)}_{t-1|t-1}| \by_{1:t})} \propto \textcolor{black}{\normal_{m_t}(\by_t|\bH_t\forex_{t|t-1}^{(j)},\bH_t\bQ_t\bH_t'+\bR_t).}
\end{equation}
This reweighting of the $j$th forecast ensemble member \textcolor{black}{$\forex_{t|t-1}^{(j)}$} from prior weight $1/N$ to posterior weight $\alpha_t^{(j)}$ is essentially what happens in an importance sampler or in the update step of a particle filter that integrates out the innovation error with covariance $\bQ_t$. This reweighting update is illustrated in a toy example in Figure \ref{fig:updatingillustration} for the simple case $m_t=1$, $\bH_t=1$, and $\bQ_t = \bfzero$. However, as shown in \citet[][]{Snyd:Beng:Bick:08} by setting their observation error covariance matrix $\bR$ to be $\bH_t\bQ_t\bH_t'+\bR_t$ in our notation, the particle weights \emph{degenerate}, meaning that one mixture component would get all or nearly all of the weight, unless the ensemble size $N$ increases exponentially with the effective dimension given by the variance of the particle loglikelihood (see Section \ref{sec:pflik} below). Thus, in our high-dimensional setting where only a small ensemble size $N$ is computationally affordable, this update based on reweighting is problematic.

\begin{figure}
\begin{minipage}[t]{.09\textwidth}
~\\
%
%
%
\vspace{-44mm}

posterior\\
\vspace{5mm}

likelihood\\

\vspace{5mm}
prior
\end{minipage}
\hfill
\begin{minipage}[t]{.28\textwidth}
	\centering 
	\includegraphics[width=.98\linewidth]{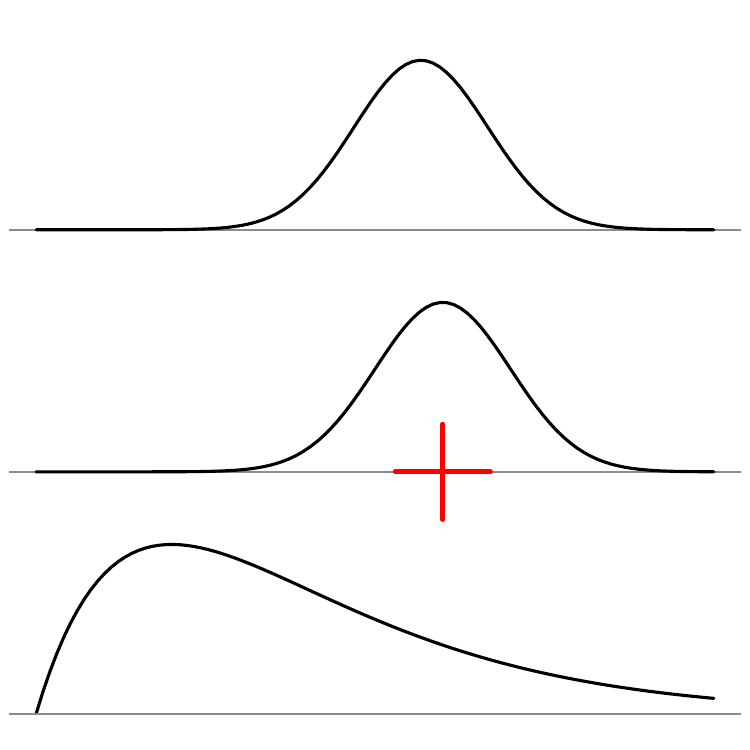}\\
\small Exact
\end{minipage}
\hfill
\begin{minipage}[t]{.28\textwidth}
	\centering 
	\includegraphics[width=.98\linewidth]{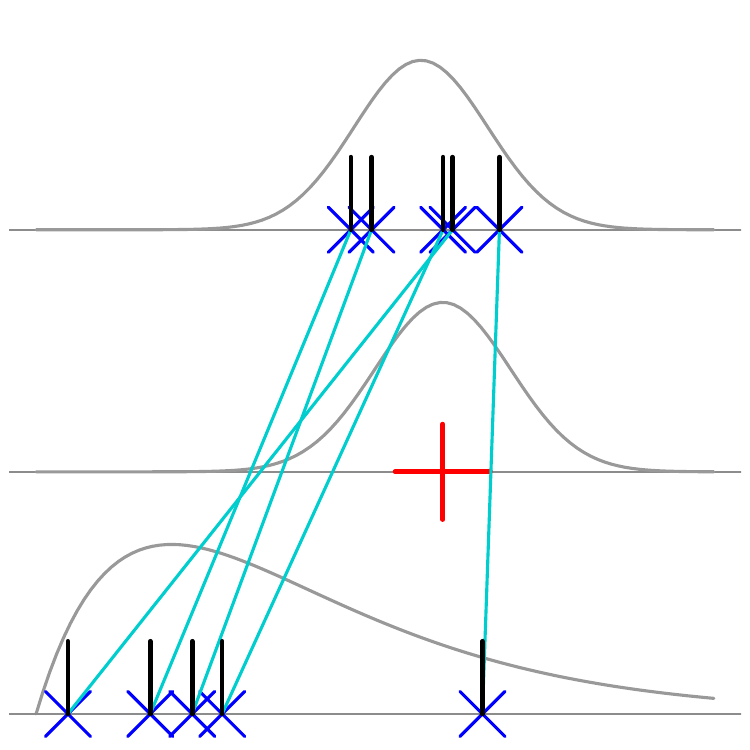}\\
\small EnKF 
\end{minipage}
\hfill
\begin{minipage}[t]{.28\textwidth}
	\centering 
	\includegraphics[width=.98\linewidth]{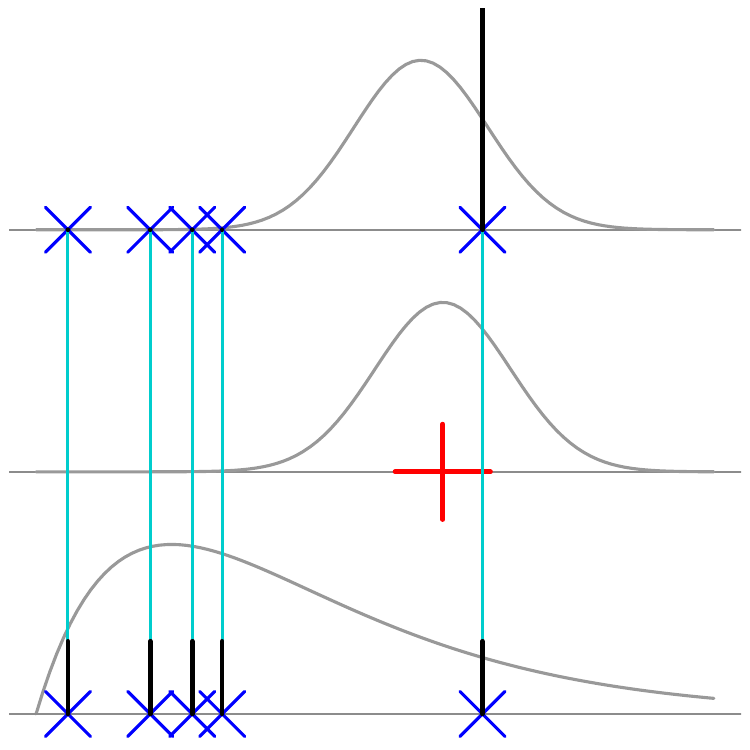}\\
\small \color{black}{Particle Filter} 
\end{minipage}
\caption{Comparison of \textcolor{black}{updating methods} in a simple one-dimensional example with a non-Gaussian prior distribution. The bars above the ensemble/particles are proportional to the weights. Both approximation methods start with the same, equally\textcolor{black}{-}weighted $N=5$ prior samples. Even for this one-dimensional example, the \textcolor{black}{importance weights degenerate for the particle filter}, while the EnKF shifting update obtains a better representation of the posterior.}
\label{fig:updatingillustration}
\end{figure}

\subsubsection{{\color{black}Update with a Gaussian prior}}

In contrast \textcolor{black}{to the particle filter}, the EnKF update does not reweight but rather shifts the forecast ensemble. The form of the shift is motivated by the fact that if the evolution \textcolor{black}{operator $\evol_t$ is linear at each time $t$}, then the true prior distribution will be \textcolor{black}{Gaussian}. In that case, it is natural to approximate the mixture of Gaussians in the prior distribution \eqref{eq:forecast1} by a \textcolor{black}{single} Gaussian distribution:
\begin{equation}
\label{eq:forecast2}
\textcolor{black}{\textstyle p(\bx_t | \by_{1:t-1})= \normal_n(\bx_t|\widehat\bfmu_{t|t-1},\widehat\bfSigma_{t|t-1}),}
\end{equation}
where $\widehat\bfmu_{t|t-1}$ and $\widehat\bfSigma_{t|t-1}$ are estimates of the prior mean and covariance matrix (see Section \ref{sec:tapering} below) based on the forecast ensemble obtained in \eqref{eq:enkfforecast}. Given this Gaussian prior, the posterior would also be Gaussian:
\begin{equation}
\label{eq:enkfpost}
\textcolor{black}{p(\bx_t | \by_{1:t}) \propto p(\by_t|\bx_t) p(\bx_t|\by_{1:t-1})} = \normal_{m_t}(\by_t|\bH_t\bx_t,\bR_t) \normal_n(\bx_t|\widehat{\bfmu}_{t|t-1},\widehat{\bfSigma}_{t|t-1}) \propto \normal_n(\bx_t|\widehat{\bfmu}_{t|t},\widehat{\bfSigma}_{t|t}),
\end{equation}
where $\widehat{\bfmu}_{t|t}=\widehat{\bfmu}_{t|t-1}+\widehat{\bK}_t(\by_t-\bH_t\widehat{\bfmu}_{t|t-1})$, $\widehat{\bfSigma}_{t|t}=(\bI - \widehat\bK_t\bH_t)\widehat{\bfSigma}_{t|t-1}$, and 
\begin{equation}
\label{eq:gain}
\widehat\bK_t = \widehat{\bfSigma}_{t|t-1} \bH_t'(\bH_t\widehat{\bfSigma}_{t|t-1}\bH_t'+\bR_t)^{-1}
\end{equation}
is an estimate of the Kalman gain, $\bK_t$.

\subsubsection{The \textcolor{black}{ensemble Kalman filter} update}

{\color{black}
In the stochastic EnKF update, instead of calculating the posterior distribution in \eqref{eq:enkfpost} explicitly, a sample from this distribution is obtained via \emph{shifting} the forecast ensemble members by essentially carrying out conditional simulation \citep[e.g.,][]{Katzfuss2015b}. This leads to the following algorithm:
\begin{framed}
\begin{algorithm}
\label{alg:enkf} \textbf{Ensemble Kalman filter (EnKF)}\\
Start with an initial ensemble $\bx_{0|0}^{(1)},\ldots,\bx_{0|0}^{(N)} \sim \normal(\bfmu_{0|0},\bfSigma_{0|0})$. 
Then, at each time $t=1,\ldots,T$, do the following for each $j =1,\ldots,N$:
\begin{description}
\item[1. Forecast Step:] Compute forecast and prior ensemble members, $\forex_{t|t-1}^{(j)} = \evol_t(\bx_{t-1|t-1}^{(j)})$ and $\bx_{t|t-1}^{(j)} = \forex_{t|t-1}^{(j)} + \bw_t^{(j)}$, respectively, where $\bw_t^{(j)} \sim \normal_n(\bfzero,\bQ_t)$.
\item[2. Update Step:]  Shift each ensemble member as
\begin{equation}
\bx_{t|t}^{(j)}  = \bx_{t|t-1}^{(j)} + \widehat{\bK}_t(\by_t - \widetilde\by_t^{(j)}) \equalscolon \update_t(\forex_{t|t-1}^{(j)} | \by_t,\forex_{t|t-1}^{(1:N)},\bftheta_t),
\label{eq:enkfupdate}
\end{equation}
where $\widehat{\bK}_t$ is defined in \eqref{eq:gain}, and $\widetilde\by_t^{(j)}  = \lobs_t\bx_{t|t-1}^{(j)} + \bv_{t}^{(j)}$ is a pseudo-observation with $\bv_t^{(j)} \stackrel{iid}{\sim} \normal(\bfzero,\bR_t)$.
\end{description}
\end{algorithm}
\end{framed}
The EnKF shifting update $\update_t$ in \eqref{eq:enkfupdate} is derived based on assuming $\evol_t$ is linear and hence that the prior distribution is multivariate Gaussian. The forecast ensemble $\forex_{t|t-1}^{(1:N)}$ enters $\update_t$ through the estimated Kalman gain $\widehat{\bK}_t$, and $\bH_t$, $\bQ_t$, and $\bR_t$ are evaluated at the parameter value $\bftheta_t$, which is currently still assumed to be known. 
}
The resulting filtering distribution converges to the true filtering distribution as $N\rightarrow\infty$, assuming the estimator $\bfSigma_{t|t-1}$ converges to the true forecast covariance (see Section \ref{sec:tapering} below). For finite $N$, in contrast to many other SMC methods, the EnKF update does not deteriorate for large $n$, as long as a reasonable estimate of $\bK_t$ or $\bfSigma_{t|t-1}$ can be found. It has been shown repeatedly \citep[e.g.,][]{Lei2010} that this shifting update also works surprisingly well for nonlinear evolution or non-Gaussian priors (see Figure \ref{fig:updatingillustration} for an illustration), including operational, real-world applications \citep[e.g.,][]{Houtekamer2005,Bonavita2010,Houtekamer2014}.

\subsection{Estimation of matrices \label{sec:tapering}}

The EnKF requires estimates of the mean {\color{black}vector} $\widehat\bfmu_{t|t-1}$ and covariance matrix $\widehat\bfSigma_{t|t-1}$ of the prior distribution in \eqref{eq:forecast2}. The prior covariance {\color{black} matrix} is required for the EnKF update through the Kalman gain \eqref{eq:gain}, and both the mean and covariance {\color{black} matrix} are needed in the EnKF likelihood in Section \ref{sec:elik} below.  The prior mean can simply be estimated by the sample mean of the forecast ensemble; that is, $\widehat\bfmu_{t|t-1} = \frac{1}{N} \sum_{j=1}^N \forex_{t|t-1}^{(j)}$.

Estimation of the $n \times n$ prior covariance matrix based on small ensembles is often challenging in applications with large state dimension $n$,
{\color{black}
which can lead to ensemble collapse \textcolor{black}{\citep[e.g.,][]{Saetrom2013}} similar to the degeneration of the particle filter. In the EnKF, this collapse is counteracted by variance inflation \citep[e.g.,][]{Ande:07a}, data-dimension-reduction via shrinkage regression \textcolor{black}{\citep{Saetrom2011}}, or regularization via tapering. We focus here on \emph{tapering} \citep[e.g.,][]{Furr:Gent:Nych:06,Ande:07,Furr:Beng:07}, which is a very powerful method to avoid rank deficiency and spurious correlations by assuming that dependence is expected to decrease with increasing spatial distance. More precisely, tapering}
refers to entrywise multiplication with a sparse positive definite correlation matrix defined based on the distances between the elements. This leads to the expression 
{\color{black}
$\widehat\bfSigma_{t|t-1} = \widetilde\bfSigma_{t|t-1} + \bQ_t$ with $\widetilde\bfSigma_{t|t-1} = \bC_{t} \circ \mathbf{\mathcal{T}}_t$, where $\bC_{t}$
}
is the sample covariance matrix of $\{\forex_{t|t-1}^{(1)},\ldots,\forex_{t|t-1}^{(N)}\}$, and $\mathbf{\mathcal{T}}_t$ is the sparse correlation matrix. Other regularization methods for states that are spatially referenced \citep[e.g.,][]{Ott:Hunt:Szun:04,Hunt2007} or not \citep[e.g.,][]{Pourahmadi2013} will not be discussed here, but could also be applied in some of our methods described later. 
{\color{black}
Also note that $\widehat\bfSigma_{t|t-1}$ enters the Kalman gain in \eqref{eq:gain} through the $n \times m_t$ matrix $\widehat{\bfSigma}_{t|t-1} \bH_t'$ and the $m_t \times m_t$ matrix $\bH_t\widehat{\bfSigma}_{t|t-1}\bH_t'$, and so the EnKF can be stable even for massive $n$, as long as $m_t$ is not too large.
} 

\subsection{Ensemble Kalman smoother \label{sec:EnKSreview}}

To approximate the smoothing distribution \textcolor{black}{$p(\bx_t|\by_{1:T})$} for some $T$, various extensions of the EnKF to an ensemble Kalman smoother have been proposed \citep{vanL:Even:96,Even:vanL:00,Khar:Ande:08,Stro:Stei:Lesh:10,Bocq:Sako:14}.
The most widely used smoother is the EnKS algorithm of \cite{Even:vanL:00}. Like the EnKF, the EnKS is a forward-only algorithm that is applicable even for intractable evolution operators $\evol_t$. At each time $t$, it essentially carries out an EnKF update on the augmented state that includes the entire history, $\bx_{1:t} = (\bx_1',\ldots,\bx_t')'$:
\begin{framed}
\begin{algorithm}
\label{alg:enks} \textbf{Ensemble Kalman smoother (EnKS)}\\
Start with an initial ensemble $\bx_{0|0}^{(1)},\ldots,\bx_{0|0}^{(N)} \sim \normal(\bfmu_{0|0},\bfSigma_{0|0})$. Then, at each time $t=1,\ldots,T$, do the following for each $j =1,\ldots,N$:
\begin{description}
\item[1. Forecast Step:] Compute $\bx_{t|t-1}^{(j)} = \evol_t(\bx_{t-1|t-1}^{(j)}) + \bw_t^{(j)}$, where $\bw_t^{(j)} \sim \normal_n(\bfzero,\bQ_t)$.
\item[2. Update Step:] For $l=1,\ldots,t$ compute $\bx_{l|t}^{(j)}  = \bx_{l|t-1}^{(j)} + \widehat{\bK}_{l,t}(\by_t - \widetilde\by_t^{(j)})$
where $\widetilde\by_t^{(j)}  = \lobs_t\bx_{t|t-1}^{(j)} + \bv_{t}^{(j)}$ is the pseudo-observation with $\bv_t^{(j)} \sim \normal_{m_t}(\bfzero,\bR_t)$, $\widehat{\bK}_{l,t}= \widehat{\bfSigma}_{l,t|t-1} \bH_t'(\bH_t\widehat{\bfSigma}_{t,t|t-1}\bH_t'+\bR_t)^{-1}$, and $\widehat{\bfSigma}_{l,t|t-1}$ is a regularized version of the sample cross-covariance matrix of $\bx_{l|t-1}^{(1:N)}$ and $\bx_{t|t-1}^{(1:N)}$.
\end{description}
\end{algorithm}
\end{framed}

The resulting ensemble $\bx_{t|T}^{(1:N)}$ is then an approximate sample from the smoothing distribution \textcolor{black}{$p(\bx_t|\by_{1:T})$}, and more generally, $\bx_{1:T|T}^{(1:N)}$ is an approximate sample from \textcolor{black}{$p(\bx_{1:T}|\by_{1:T})$.}

Note that regularization often results in $\widehat{\bfSigma}_{l,t|t-1}=\bfzero$ for $l=1,\ldots,t-k$ for some lag $k$. The computations in Step 2 of Algorithm \ref{alg:enks} then only have to be carried out for $k$ lags $l=t-k+1,\ldots,t$.


\section{Preliminaries \textcolor{black}{for extended ensemble Kalman techniques}} \label{sec:preliminaries}

In the previous section, we assumed that the latent $\by_{1:t}$ and the parameters $\bftheta_{1:t}$ are all known. To obtain the necessary {\color{black} background and supporting results} for the extended \textcolor{black}{EnK techniques} to be introduced in Section \ref{sec:methods}, we now extend the approaches reviewed in Section \ref{sec:enkf} to the more general HSSMs of the form \eqref{eq:transmodel}--\eqref{eq:paramodel} with non-Gaussian observations or unknown parameters.   



Assume that we have a sample $\textcolor{black}{\bftheta_{1:t-1}^{(1)},\ldots,\bftheta_{1:t-1}^{(M)}}$ from \textcolor{black}{$p(\bftheta_{1:t-1}|\bz_{1:t-1})$}, the filtering distribution of the parameters at time $t-1$. Except for special cases (see Section \ref{sec:generalspecial} below), we now need an ensemble of states, $\bx_{t-1|t-1}^{(i,1:N)}$ corresponding to each parameter sample $\textcolor{black}{\bftheta_{1:t-1}^{(i)}}$; that is, we need an {\it ensemble of ensembles}. Specifically, assume that the joint filtering distribution of parameters and states at time $t-1$ is given by
\[
\textstyle \textcolor{black}{p(\bftheta_{1:t-1},\bx_{t-1} | \bz_{1:t-1})} = \textstyle\frac{1}{MN} \sum_{i=1}^M \sum_{j=1}^N \delta_{(\bftheta_{1:t-1}^{(i)},\bx_{t-1|t-1}^{(i,j)})}(\bftheta_{1:t-1},\bx_{t-1}),
\]
the marginal filtering distribution of \textcolor{black}{$\bftheta_{1:t-1}$} is
$
\textstyle \textcolor{black}{p(\bftheta_{1:t-1} | \bz_{1:t-1}) = \textstyle\frac{1}{M} \sum_{i=1}^M \delta_{(\bftheta_{1:t-1}^{(i)})}(\bftheta_{1:t-1})},
$
and the conditional filtering distribution of $\bx_{t-1}$ given \textcolor{black}{$\bftheta_{1:t-1}^{(i)}$} is
\[
\textstyle \textcolor{black}{p(\bx_{t-1} | \bftheta_{1:t-1}^{(i)}, \bz_{1:t-1})} = \textstyle\frac{1}{N} \sum_{j=1}^N \delta_{(\bx_{t-1|t-1}^{(i,j)})}(\bx_{t-1}).
\]
\textcolor{black}{Conditional on $\bftheta_{1:t-1}^{(i)}$,} the \textcolor{black}{joint} prior distribution at time $t$ is then given by
\[
\textstyle \textcolor{black}{p(\bftheta_t,\bx_t | \bftheta_{1:t-1}^{(i)},\bz_{1:t-1}) = p(\bftheta_t| \bftheta_{t-1}^{(i)}) p(\bx_t|\bftheta_t,\bftheta_{1:t-1}^{(i)},\bz_{1:t-1}).}
\]
\textcolor{black}{Further, we have}
\begin{align}
\textcolor{black}{p(\bx_t|\bftheta_t, \bftheta_{1:t-1}^{(i)},\bz_{1:t-1})} 
& = \textstyle \int \textcolor{black}{p(\bx_t|\bx_{t-1},\bftheta_t,\bftheta_{1:t-1}^{(i)},\bz_{1:t-1})} dP(\bx_{t-1}|\bftheta_t,\bftheta_{1:t-1}^{(i)},\bz_{1:t-1}) \nonumber\\
& = \textstyle\frac{1}{N} \sum_{j=1}^N \normal_n(\bx_t|\forex_{t|t-1}^{(i,j)},\bQ_t) \approx 
 \normal_n(\bx_t|\widehat\bfmu_{t|t-1}^{(i)},\widehat\bfSigma_{t|t-1}^{(i)}), \label{eq:xprior}
\end{align}
where we have used the same EnKF approximation as in \eqref{eq:forecast2}. Note that $\widehat\bfmu_{t|t-1}^{(i)}$ and $\widehat\bfSigma_{t|t-1}^{(i)}$ are obtained as described in Section \ref{sec:tapering} above using $\bQ_t = \bQ_t(\bftheta_t)$, but only based on the $i$th forecast ensemble $\forex_{t|t-1}^{(i,1:N)}$, where $\forex_{t|t-1}^{(i,j)} = \evol_t(\bx_{t-1|t-1}^{(i,j)};\bftheta_t)$, $j=1,\ldots,N$.

Based on the approximation in \eqref{eq:xprior}, the joint distribution of all variables at time $t$ is
\begin{align}
\textcolor{black}{
p(\bz_t,\by_t,\bx_t,\bftheta_t|\bftheta_{1:t-1}^{(i)}, \bz_{1:t-1})} 
& \textcolor{black}{= p(\bz_t|\by_t, \bftheta_t)p(\by_t|\bx_t,\bftheta_t) p(\bx_t|\bftheta_t,\bftheta_{1:t-1}^{(i)},\bz_{1:t-1}) p(\bftheta_t|\bftheta_{t-1}^{(i)})} \nonumber \\
& \textcolor{black}{= f_t(\bz_t|\by_t,\bftheta_t) \normal_{m_t}(\by_t|\bH_t\bx_t,\bR_t) \normal_n(\bx_t|\widehat{\bfmu}_{t|t-1}^{(i)},\widehat{\bfSigma}_{t|t-1}^{(i)}) p(\bftheta_t|\bftheta_{t-1}^{(i)})} \label{eq:jointprior}
\end{align}
where $\bH_t = \bH_t(\bftheta_t)$ and $\bR_t=\bR_t(\bftheta_t)$ may also depend on the parameters $\bftheta_t$.

\subsection{Forecast independence \label{sec:generalspecial}}

An important special case is given by forecast independence of states and parameters \citep[][]{Frei:Kuns:12}, \textcolor{black}{where $p(\forex_t,\bftheta_t|\bz_{1:t-1}) = p(\forex_t|\bz_{1:t-1}) p(\bftheta_t|\bz_{1:t-1})$.}  Forecast independence leads to enormous \emph{simplifications} in the algorithms described later, because the prior distribution of the states can be calculated based on the entire ensemble of ensembles from the previous filtering distributions, instead of using only the corresponding ensemble as in \eqref{eq:xprior}; that is,
{\color{black}
\begin{equation}
\label{eq:forind}
\textstyle \textcolor{black}{p(\forex_t| \bftheta_t, \bz_{1:t-1}) = p(\forex_t|\bz_{1:t-1}) = \int p(\forex_t|\bx_{t-1})dP(\bx_{t-1}|\bz_{1:t-1})} = \normal_n(\forex_t|\widehat\bfmu_{t|t-1},\widetilde\bfSigma_{t|t-1}),
\end{equation}
where $\widehat\bfmu_{t|t-1}$ and $\widetilde\bfSigma_{t|t-1}$ are now based on all ensemble members $\forex_{t|t-1}^{(1:M,1:N)}$. From this forecast distribution, we can then obtain the prior distribution by adding $\bQ_t$ to the covariance matrix: $\widehat\bfSigma_{t|t-1}=\widetilde\bfSigma_{t|t-1} + \bQ_t$.
}
Because this means that we do not need a separate ensemble for each previous parameter value, we can set $N=1$ and write the forecast ensemble as $\forex_{t|t-1}^{(1:M)}$. Also, because the \textcolor{black}{forecast} distribution does not depend on $\bftheta_t$, we only have to propagate the ensemble once from time $t-1$ to $t$ in the algorithms described later, which is especially important if $\evol_t$ is computationally expensive.

Forecast independence can be proved asymptotically if the parameters $\bftheta_t$ only appear in $\bR_t$ \citep[see][]{Frei:Kuns:12}. It also arises exactly if the parameters do not appear in {\color{black}$\evol_t$} and are temporally \textcolor{black}{independent, $p(\bftheta_t|\bftheta_{t-1})=p(\bftheta_t)$}, because then
\[
{\color{black}
\textstyle\textcolor{black}{p(\forex_t| \bftheta_t, \bz_{1:t-1}) = \int p(\forex_t|\bx_{t-1},\bftheta_t)dP(\bx_{t-1}|\bftheta_t, \bz_{1:t-1}) = p(\forex_t|\bz_{1:t-1}) = \normal_n(\forex_t|\widehat\bfmu_{t|t-1},\widetilde\bfSigma_{t|t-1})},
}
\]
as  
{\color{black}in \eqref{eq:forind}. Instead of the joint distribution in \eqref{eq:jointprior}, it can then be shown that
\begin{align}
\label{eq:jointpriorind}
p(\bz_t,\by_t,\bx_t,\bftheta_t|\bz_{1:t-1})
& = p(\bz_t|\by_t, \bftheta_t)p(\by_t|\bx_t,\bftheta_t) p(\bx_t|\bftheta_t,\bz_{1:t-1}) p(\bftheta_t) \nonumber \\
& = f_t(\bz_t|\by_t,\bftheta_t) \normal_{m_t}(\by_t|\bH_t\bx_t,\bR_t) \normal_n(\bx_t|\widehat{\bfmu}_{t|t-1},\widetilde{\bfSigma}_{t|t-1}+\bQ_t) p(\bftheta_t). 
\end{align}
}

\subsection{Approximation of the likelihood for known $\by_t$ \label{sec:likapprox}}

It is also possible to obtain an approximation of the likelihood based on the EnKF. We first study likelihood approximations for the special case of known and observed $\by_t$ (i.e., the hierarchical SSM given by \eqref{eq:obsmodel}--\eqref{eq:paramodel}).

The distribution of all data up to time $t$ is then given by,
\[
\textcolor{black}{p(\by_{1:t} | \bftheta_{1:t}) \textstyle = \prod_{k=1}^t \, p(\by_k | \bftheta_{1:t}, \by_{1:k-1}) = \prod_{k=1}^t \, p(\by_k | \bftheta_{1:k}, \by_{1:k-1}),}
\]
where we have now made the conditioning on the parameters explicit. However, \textcolor{black}{in the} high-dimensional setting with potentially nonlinear evolution considered here, the \textcolor{black}{filtered likelihood} at time $t$, $\likelihood_t(\bftheta_{1:t}) \colonequals \textcolor{black}{p(\by_t | \bftheta_{1:t}, \by_{1:t-1})}$, is not available in closed form and cannot be exactly evaluated. Instead, in any parameter-inference procedure, we need to estimate this likelihood for given fixed parameter values based on the forecast ensemble $\forex_{t|t-1}^{(1:N)}$. For the remainder of Section \ref{sec:likapprox}, we focus on a single time point $t$ and assume that $\forex_{t|t-1}^{(1:N)}$ is a sample from the true forecast distribution, \textcolor{black}{$p(\forex_t|\by_{1:t-1},\bftheta_{1:t})$}. 

For Bayesian sampling-based inference, the closer the likelihood estimator is to the true likelihood, the better the performance of the parameter-inference procedure. Specifically, unbiased estimators of the true likelihood lead to so-called {\it exact approximate} procedures that can be shown to sample from the correct target distribution; but the \emph{variability} of the estimator is also crucial, in that the variance of the logarithm of the estimator of the likelihood determines the performance of the parameter inference for finite computation time \citep[][]{Andrieu2009}. Variances smaller than 2 are often recommended for satisfactory performance \citep[e.g.,][]{Doucet2015}.

\subsubsection{Particle likelihood \label{sec:pflik}}
If the \textcolor{black}{mixture} prior distribution in \eqref{eq:forecast1} were used, the filtered likelihood at time $t$ would also be a mixture of normals,
\begin{align}
\textcolor{black}{p(\by_t | \bftheta_{1:t}, \by_{1:t-1})} 
& = \textstyle\frac{1}{N} \sum_{j=1}^N \normal_{m_t}(\by_t|\bH_t \forex_{t|t-1}^{(j)},\bH_t \bQ_t \bH_t'+\bR_t) \equalscolon \plik_t(\by_t|\bftheta_t,\forex_{t|t-1}^{(1:N)}). \label{eq:pflik}
\end{align}
Except for the fact that we are able to integrate out the innovation error with covariance $\bQ_t$ here due to the linear Gaussian observations, $\plik_t$ is the likelihood approximation used in several inference procedures based on particle filters \textcolor{black}{\citep[e.g.,][]{Andrieu2010,Johansen2012}} and in some approaches in the EnKF literature \cite[e.g.,][]{UenoNaka:14,UenoNaka:16}. Using the notation from \eqref{eq:pfweights}, we can write this likelihood as $\plik_t(\by_t|\bftheta_t,\forex_{t|t-1}^{(1:N)}) \propto (1/N) \sum_{j=1}^N \alpha_t^{(j)}$. While the $\alpha_t^{(j)}$ and thus also the \emph{particle likelihood} $\plik_t$ are unbiased for the true likelihood \citep{DelMoral2004}, leading to exact approximate algorithms, the variance of the particle likelihood can be very large in high dimensions, resulting in algorithms with potentially poor performance in finite computation time. We now illustrate this using a simple example.
\begin{example}
\label{ex:ind}
For a single time point, consider the forecast distribution $\forex \sim \normal_n(\bfzero,\kappa\bI_n)$, with model error covariance $\bQ=\bfzero$, noise covariance $\bR = \theta\bI_n$, observation matrix $\bH=\bI_n$, and thus $n=m$; that is, $\by \sim \normal_n(\bfzero,(\kappa+\theta)\bI_{n})$. The unknown parameter is $\theta$.
\end{example}
In the setting of Example \ref{ex:ind}, Figure \ref{fig:likcomparison} shows the distribution of the particle likelihood $\plik_t$, which becomes increasingly skewed for increasing $n$. Further, it shows the trace plot for a ``pseudo-marginal'' Metropolis-Hastings (MH) sampler targeting $\theta$ for $n=50$ using $\plik_t$, whose large variances cause the Markov chain to get ``stuck'' after a particularly large likelihood is obtained. Finally, we explore the ensemble size $N$ necessary to keep the variance of the particle loglikelihood, which was averaged over 100 realizations of $\by$, to below 2, and this number clearly scales \emph{exponentially} with the dimension $n$. This result can also be shown analytically:
\begin{prop}
\label{prop:pflik}
In the setting of Example \ref{ex:ind}, we have $\var(\log \plik_t(\by|\theta,\forex^{(1:N)})) = \mathcal{O}(e^n/N)$ as $n,N \rightarrow \infty$, where the variance is evaluated with respect to $\forex^{(1:N)}$, and the $\forex^{(j)}$ are iid from the true forecast distribution.
\end{prop}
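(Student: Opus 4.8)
The plan is to apply the delta method to reduce the variance of the log-likelihood estimator to a squared coefficient of variation, and then to evaluate the latter exactly using the product structure of the Gaussian densities in Example~\ref{ex:ind}. Write $W^{(j)} \colonequals \normal_n(\by|\forex^{(j)},\theta\bI_n)$, so that $\plik_t(\by|\theta,\forex^{(1:N)}) = \frac{1}{N}\sum_{j=1}^N W^{(j)}$ is an average of $N$ i.i.d.\ terms (conditionally on $\by$). Since $\mathbb{E}[\plik_t] = \mathbb{E}[W]$ and $\var(\plik_t) = \var(W)/N$, a first-order Taylor expansion of $\log(\cdot)$ about $\mathbb{E}[W]$ gives
\[
\var\!\big(\log\plik_t\big) \approx \frac{\var(\plik_t)}{(\mathbb{E}\,\plik_t)^2} = \frac{1}{N}\left(\frac{\mathbb{E}[W^2]}{(\mathbb{E}\,W)^2} - 1\right),
\]
so the problem reduces to computing the two moments $\mathbb{E}[W]$ and $\mathbb{E}[W^2]$ with respect to $\forex \sim \normal_n(\bfzero,\kappa\bI_n)$.

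Next I would compute these moments coordinatewise. Because $\bH=\bI_n$, $\bR=\theta\bI_n$, $\bQ=\bfzero$, and $\forex$ has i.i.d.\ $\normal(0,\kappa)$ entries, both $W$ and $W^2$ factor over the $n$ coordinates, so it suffices to evaluate one-dimensional Gaussian integrals. Recognizing each integral as a convolution of two Gaussian densities yields $\mathbb{E}[W] = \normal_n(\by|\bfzero,(\kappa+\theta)\bI_n)$ (the true likelihood, confirming unbiasedness) and, using the identity $\normal(y|x,\theta)^2 = (2\sqrt{\pi\theta})^{-1}\normal(y|x,\theta/2)$, $\mathbb{E}[W^2] = \prod_{i=1}^n (2\sqrt{\pi\theta})^{-1}\normal(y_i|0,\kappa+\theta/2)$. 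Dividing, the squared coefficient of variation factorizes as
\[
\frac{\mathbb{E}[W^2]}{(\mathbb{E}\,W)^2} = \left(\frac{\kappa+\theta}{\sqrt{\theta(2\kappa+\theta)}}\right)^{\!n} \exp\!\left(\frac{\kappa\,\|\by\|^2}{(\kappa+\theta)(2\kappa+\theta)}\right).
\]

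To extract the rate in $n$, I would substitute the typical value of the data norm. Under the data-generating law $\by\sim\normal_n(\bfzero,(\kappa+\theta)\bI_n)$, the norm concentrates as $\|\by\|^2 = n(\kappa+\theta)(1+o(1))$, so the displayed ratio equals $\exp(cn(1+o(1)))$ with
\[
c = \tfrac{1}{2}\log\frac{(\kappa+\theta)^2}{\theta(2\kappa+\theta)} + \frac{\kappa}{2\kappa+\theta}.
\]
Both summands are strictly positive (for the first, note $(\kappa+\theta)^2 - \theta(2\kappa+\theta) = \kappa^2 > 0$), so $c>0$ and the base of the exponential is genuinely larger than one. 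Combining this with the delta-method reduction gives $\var(\log\plik_t) \approx (e^{cn}-1)/N = \mathcal{O}(e^{cn}/N)$, i.e.\ exponential growth in $n$ together with $\mathcal{O}(1/N)$ decay, which is the claimed $\mathcal{O}(e^n/N)$ (here with rate constant $c$ in the exponent).

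The step I expect to be the main obstacle is making the delta-method reduction rigorous in the joint limit $n,N\to\infty$, rather than in the iterated limit $N\to\infty$ with $n$ held fixed. Since the squared coefficient of variation of $W$ is itself exponentially large in $n$, the average $\plik_t$ concentrates around its mean only once $N$ dominates $e^{cn}$; for smaller $N$ the linearization of $\log$ is not automatically valid and $\log\plik_t$ is heavily left-skewed. A careful argument would either restrict to the regime $N e^{-cn}\to\infty$, or control the higher-order and tail contributions directly (e.g.\ via a Gaussian central limit theorem for $\log\plik_t$ as in the SMC literature), and would also phrase the $\|\by\|^2$ concentration as a statement holding for almost every data sequence or in expectation over $\by$.
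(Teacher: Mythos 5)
Your argument is correct at the same (heuristic) level of rigor as the paper's, and it shares the paper's final step --- the delta-method reduction $\var(\log\plik_t) \approx \var(\plik_t)/(\mathbb{E}\,\plik_t)^2 = \frac{1}{N}\bigl(\mathbb{E}[W^2]/(\mathbb{E}\,W)^2 - 1\bigr)$ --- but the way you evaluate the squared coefficient of variation is genuinely different. The paper proceeds distributionally: it invokes a CLT across the $n$ coordinates to approximate each particle log-likelihood $\log W^{(j)}$ by $\normal(na,nb)$, treats $W^{(j)}$ as lognormal, applies a CLT across particles, and reads off $\var(W)/(\mathbb{E}\,W)^2 \approx e^{nb}$ from lognormal moment formulas. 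You instead compute $\mathbb{E}[W]$ and $\mathbb{E}[W^2]$ \emph{exactly} via Gaussian convolution identities, which buys you (i) unbiasedness of the particle likelihood as a byproduct, and (ii) the exact exponential rate $c = \tfrac{1}{2}\log\frac{(\kappa+\theta)^2}{\theta(2\kappa+\theta)} + \frac{\kappa}{2\kappa+\theta}$ after substituting the typical data norm. Your rate is in fact the more trustworthy one: the second moment $\mathbb{E}[W^2]$ is governed by tail events of $\log W$ (its cumulant generating function evaluated at $2$), which the bulk CLT/lognormal approximation does not capture correctly --- the paper's exponent $b$ is far larger than your exact $c$ (e.g., with $\kappa=4$, $\theta=1$, your $c\approx 0.96$ while the paper's $b$ is in the tens). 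Both arguments agree on the substance of the proposition, exponential growth in $n$ together with $1/N$ decay, which is also how the paper's loose ``$\mathcal{O}(e^n/N)$'' must be read under either derivation, since neither exponent constant equals $1$. Finally, the caveat you raise at the end is well placed: the delta-method linearization is only automatically valid when $N$ dominates $e^{cn}$, and the paper passes over exactly this point (its ``$\stackrel{a}{=}$'' step), so the gap you flag is one in the published argument as well, not one peculiar to your approach.
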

All proofs are provided in Appendix \ref{app:proofs}.

\begin{figure}
	\begin{subfigure}{.48\textwidth}
	\centering
	\includegraphics[width =1\linewidth]{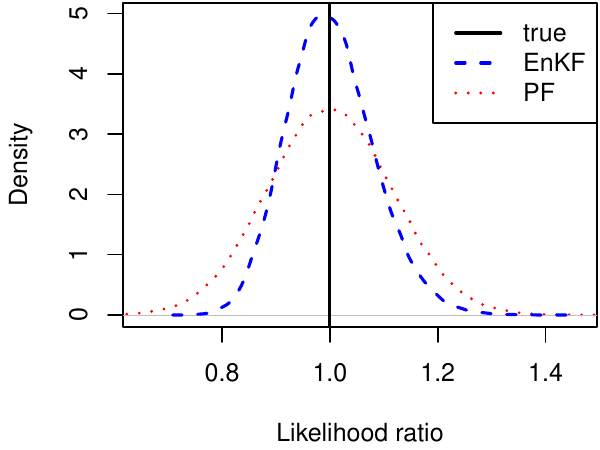}
	\caption{Distribution of likelihood \textcolor{black}{ratios} for $n=1$}
	\label{fig:likdist1}
	\end{subfigure}%
\hfill
	\begin{subfigure}{.48\textwidth}
	\centering
	\includegraphics[width =1\linewidth]{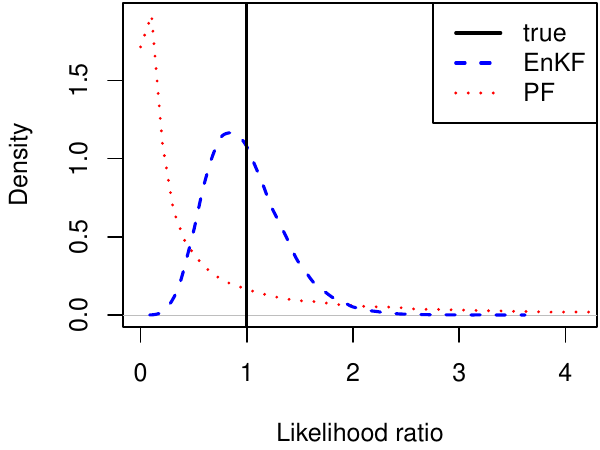}
	\caption{Distribution of likelihood \textcolor{black}{ratios} for $n=6$}
	\label{fig:likdist2}
	\end{subfigure}%

\vspace{3mm}
	\begin{subfigure}{.48\textwidth}
	\centering
	\includegraphics[width =1\linewidth]{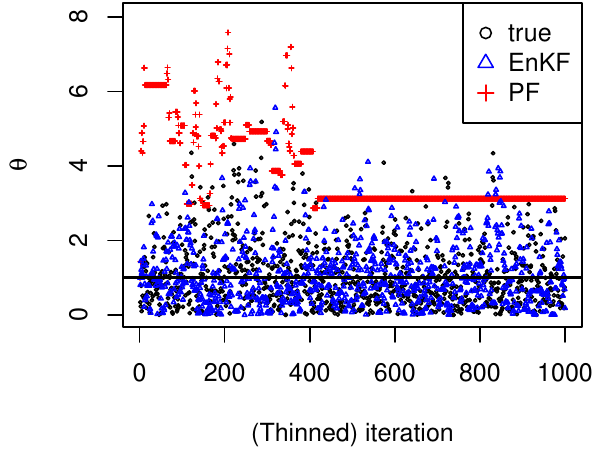}
	\caption{Trace plots for MH (thinned by factor 5)}
	\label{fig:MHtrace}
	\end{subfigure}%
\hfill
	\begin{subfigure}{.48\textwidth}
	\centering
	\includegraphics[width =1\linewidth]{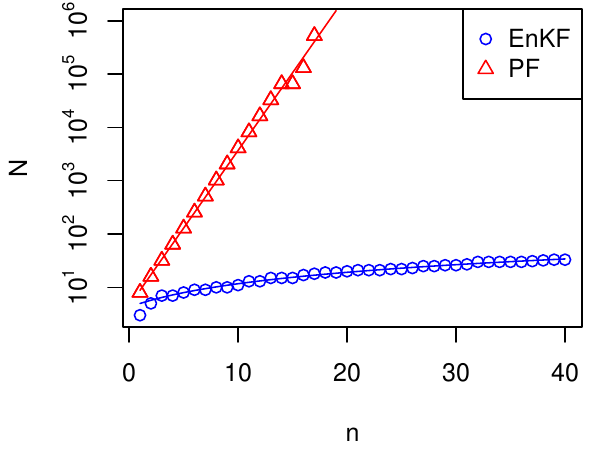}
	\caption{Necessary $N$ to keep the variance of the loglikelihood below 2}
	\label{fig:below2}
	\end{subfigure}%
%
\caption{Comparison of the likelihoods discussed in Section \ref{sec:likapprox} in the setting of Example \ref{ex:ind} with $\kappa=4$, and true parameter value $\theta=1$. Panels (\subref{fig:likdist1}) and (\subref{fig:likdist2}): Distributions of EnKF and PF likelihood approximations divided by the true likelihood at $\theta=1$ for $N=50$ with fixed data $\by$ and varying forecast ensembles. Panel (\subref{fig:MHtrace}): Trace plot for ``pseudo-marginal'' Metropolis-Hastings (MH) algorithms targeting $\theta$ with a uniform prior on $\mathbb{R}^+$, $n=N=50$, and a normal proposal distribution with standard deviation 0.8. Panel (\subref{fig:below2}): Minimum $N$ to keep the variance of the loglikelihood below 2, along with a least-squares line fitted on the original scale (for EnKF) and on a log-scale (for PF).}
\label{fig:likcomparison}
\end{figure}

The results obtained here for a particle or importance-sampling likelihood at a single time point extend directly to a particle filter in a state-space setting. While a particle filter used to approximate a likelihood in a sequential setting will typically include a resampling step at every time point \citep[e.g.,][]{Andrieu2010}, the resulting forecast or prior distribution at the next time point can be at most as good as the iid sampling from the true prior considered here, assuming that $\evol_t$ is intractable, and so the particle-filter approximation to the likelihood will also degenerate in high dimensions.

\subsubsection{EnKF likelihood \label{sec:elik}}

Due to the problems of the PF likelihood in high dimensions, we use the EnKF approximation of the prior distribution in \eqref{eq:forecast2} to obtain an \emph{EnKF likelihood}:
\begin{align}
\textcolor{black}{p(\by_t | \bftheta_{1:t}, \by_{1:t-1})} & = \textstyle \int \textcolor{black}{p(\by_t|\bx_t,\bftheta_{1:t},\by_{1:t-1})}dP(\bx_t|\bftheta_{1:t},\by_{1:t-1}) \nonumber \\
& = \textstyle\int \normal_{m_t}(\by_t|\bH_t\bx_t,\bR_t) \normal_n(\bx_t|\widehat{\bfmu}_{t|t-1},\widehat{\bfSigma}_{t|t-1}) d\bx_t \nonumber \\
& = \normal_{m_t}(\by_t|\bH_t\widehat{\bfmu}_{t|t-1},\bH_t\widehat{\bfSigma}_{t|t-1}\bH_t'+\bR_t) \equalscolon \elik_t(\by_t|\bftheta_t,\forex_{t|t-1}^{(1:N)}), \label{eq:filteringlikdef}
\end{align}
where $\bH_t$, $\bR_t$, and $\widehat{\bfmu}_{t|t-1}$ and $\widehat{\bfSigma}_{t|t-1}$ may depend on $\bftheta_t$ through $\evol_t$ and $\bQ_t$.

The EnKF likelihood has been successfully used in several examples and applications \citep[e.g.,][]{Mitc:Hout:00,Stro:Stei:Lesh:10,Frei:Kuns:12,Stroud2016}. For linear SSMs, the true forecast distribution and likelihood are both exactly Gaussian, and so the EnKF likelihood converges to the true likelihood as $N \rightarrow \infty$, assuming no localization for $\widehat{\bfSigma}_{t|t-1}$. For finite $N$, in contrast to the PF likelihood, the EnKF likelihood is a biased estimator of the true likelihood, and thus does not lead to exact approximate inference. Special care should be taken regarding parameters influenced by tapering, as, for example, a spatial-scale parameter in the innovation covariance $\bQ_t$ might be overestimated to compensate for an overly restrictive taper in the forecast covariance $\widehat{\bfSigma}_{t|t-1}$ (see Section \ref{sec:tapering}). The estimated parameter is then useful for running an EnKF with the same taper settings, but it might not be the ``true'' spatial scale of the innovation covariance in the model \eqref{eq:transmodel}--\eqref{eq:paramodel}. 

However, more importantly, the variance of the EnKF (log)likelihood can be relatively small even in high dimensions, as long as a good regularized estimator $\widehat{\bfSigma}_{t|t-1}$ can be found. This can be viewed as a manifestation of the well-known trade-off between bias and variance that is central to statistical learning with ``big'' data \citep[e.g.,][]{Hastie2009}.
To illustrate this point, we consider the EnKF likelihood in the setting of Example \ref{ex:ind} with a diagonal $\widehat{\bfSigma}_{t|t-1}$. Figure \ref{fig:likcomparison} shows that the EnKF likelihood is less skewed, less variable, and thus closer to the true likelihood than the particle likelihood. This also leads to much better mixing for the MH sampler, and the resulting estimated posterior for $\theta$ (not shown) is nearly indistinguishable from the estimated posterior using the true likelihood. Figure \ref{fig:likcomparison}(\subref{fig:below2}) indicates that for the EnKF likelihood, $N$ only has to increase \emph{linearly} with the dimension $n$ to keep the variance of the loglikelihood bounded. This can also be shown analytically:
\begin{prop}
\label{prop:enkflik}
In the setting of Example \ref{ex:ind}, we have $\var(\log \elik_t(\by|\bftheta,\forex^{(1:N)})) = \mathcal{O}(n/N)$ as $n,N \rightarrow \infty$, where the variance is evaluated with respect to $\forex^{(1:N)}$, and the $\forex^{(j)}$ are iid from the true forecast distribution.
\end{prop}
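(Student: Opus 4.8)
The plan is to exploit the diagonal (hence coordinatewise independent) structure of Example~\ref{ex:ind}, which reduces the $n$-dimensional variance to a sum of one-dimensional variances. First I would evaluate the EnKF likelihood \eqref{eq:filteringlikdef} explicitly. With $\bH=\bI_n$, $\bR=\theta\bI_n$, and $\bQ=\bfzero$, the estimated mean is the sample mean $\widehat{\bfmu}_{t|t-1}=(\widehat\mu_1,\dots,\widehat\mu_n)'$ and the diagonal (tapered) estimated covariance is $\widehat{\bfSigma}_{t|t-1}=\diag(\widehat\sigma_1^2,\dots,\widehat\sigma_n^2)$, where $\widehat\mu_k$ and $\widehat\sigma_k^2$ are the sample mean and variance of the $k$th coordinate $\{\forex^{(j)}_k\}_{j=1}^N$. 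The log-likelihood then separates across coordinates,
\[
\log\elik_t(\by|\theta,\forex^{(1:N)}) = \sum_{k=1}^n g_k,
\qquad
g_k \colonequals -\tfrac12\log(2\pi)-\tfrac12\log(\widehat\sigma_k^2+\theta)-\tfrac12\frac{(y_k-\widehat\mu_k)^2}{\widehat\sigma_k^2+\theta}.
\]

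The key structural point is that, because the true forecast covariance $\kappa\bI_n$ is diagonal, the coordinate sequences $\{\forex^{(j)}_k\}_{j=1}^N$ are mutually independent across $k$, so $g_1,\dots,g_n$ are independent and $\var(\log\elik_t)=\sum_{k=1}^n\var(g_k)$. It therefore suffices to bound $\var(g_k)\le C_k/N$ with $\sum_{k=1}^n C_k=\mathcal{O}(n)$; summation then gives the claimed $\mathcal{O}(n/N)$.

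For the per-coordinate bound I would use three facts. (i) For a Gaussian sample the sample mean and variance are independent, with $\widehat\mu_k\sim\normal(0,\kappa/N)$ and $(N-1)\widehat\sigma_k^2/\kappa\sim\chi^2_{N-1}$; consequently $\var(\widehat\sigma_k^2)=\mathcal{O}(1/N)$, and since $y_k-\widehat\mu_k\sim\normal(y_k,\kappa/N)$, both $E[(y_k-\widehat\mu_k)^2]=y_k^2+\kappa/N=\mathcal{O}(1)$ and $\var((y_k-\widehat\mu_k)^2)=\mathcal{O}(1/N)$. (ii) Crucially, $\widehat\sigma_k^2+\theta\ge\theta>0$, so the denominator is bounded away from zero — this is precisely what fails for the particle likelihood and what keeps the EnKF variance small. (iii) On $[\theta,\infty)$ the maps $w\mapsto\log w$ and $w\mapsto w^{-1}$ are Lipschitz (constants $1/\theta$ and $1/\theta^2$), and $\var(f(X))\le L^2\var(X)$ for any $L$-Lipschitz $f$. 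Applying (iii) to the logarithmic term gives $\var(\log(\widehat\sigma_k^2+\theta))=\mathcal{O}(1/N)$ and $\var((\widehat\sigma_k^2+\theta)^{-1})=\mathcal{O}(1/N)$. For the quadratic term, write $R_k=(y_k-\widehat\mu_k)^2$ and $W_k=\widehat\sigma_k^2+\theta$; by the independence in (i), $R_k\perp W_k$, so
\[
\var(R_k/W_k)=E[R_k]^2\var(W_k^{-1})+\var(R_k)\,E[W_k^{-1}]^2+\var(R_k)\var(W_k^{-1}),
\]
and each summand is $\mathcal{O}(1/N)$ because $E[R_k],E[W_k^{-1}]=\mathcal{O}(1)$ while $\var(R_k),\var(W_k^{-1})=\mathcal{O}(1/N)$. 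Bounding the cross-covariance between the two terms of $g_k$ by Cauchy--Schwarz then yields $\var(g_k)=\mathcal{O}(1/N)$.

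Tracking the dependence on the data, the per-coordinate constant is a polynomial in $y_k$ (through $E[R_k]\approx y_k^2$ and $\var(R_k)\approx 4\kappa y_k^2/N$), so $\var(\log\elik_t)\le \frac{C}{N}\sum_{k=1}^n(1+y_k^2+y_k^4)$. For a typical data vector $\by\sim\normal_n(\bfzero,(\kappa+\theta)\bI_n)$ the law of large numbers gives $\tfrac1n\sum_k y_k^4\to 3(\kappa+\theta)^2$, whence $\sum_k(1+y_k^2+y_k^4)=\mathcal{O}(n)$ and $\var(\log\elik_t)=\mathcal{O}(n/N)$ (the same bound holds after taking expectation over $\by$). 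The main obstacle is controlling the ratio term $R_k/W_k$; what makes it — and the whole proposition — work is the lower bound $W_k\ge\theta>0$, which keeps all negative moments of the denominator bounded, in sharp contrast to the particle likelihood of Proposition~\ref{prop:pflik}. The remainder is routine moment bookkeeping once the coordinate independence and the mean/variance independence are exploited.
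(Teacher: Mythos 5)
Your proof is correct, and it shares the paper's overall skeleton: both exploit the diagonal taper to factor the log-likelihood into $n$ independent per-coordinate terms and then establish a per-coordinate variance bound of $\mathcal{O}(1/N)$, summing at the end. Where you genuinely diverge is in how the per-coordinate bound is obtained. The paper argues asymptotically: it notes that $\hat\sigma_i^2+\theta$ is approximately normal with variance $2\kappa^2/N$, declares the fluctuation of $\hat\mu_i$ negligible so that $c_i=(y_i-\hat\mu_i)^2$ may be ``regarded as a constant,'' and applies the delta method to $\tfrac12\log(\hat\sigma_i^2)+\tfrac12 c_i/\hat\sigma_i^2$ to read off an asymptotic variance of order $1/N$. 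You replace this with finite-sample moment bounds: the exact independence of the Gaussian sample mean and variance gives $R_k\perp W_k$; the lower bound $W_k\ge\theta>0$ makes $w\mapsto\log w$ and $w\mapsto w^{-1}$ Lipschitz, so $\var(f(W_k))\le L^2\var(W_k)=\mathcal{O}(1/N)$; and the product-variance formula for independent factors plus Cauchy--Schwarz on the cross term finish the argument. What the paper's route buys is brevity and an explicit asymptotic constant, $\frac{\kappa^2}{2N}\bigl(\sigma^{-2}-c_i\sigma^{-4}\bigr)^2$. What yours buys is rigor: the paper's passage from convergence in distribution to a statement about variances, and its treatment of $c_i$ as constant, are heuristics that your bookkeeping makes unnecessary, and you also give an honest account of how the constant depends on the data (polynomially in $y_k$, summing to $\mathcal{O}(n)$ for $\by$ generated from the model), a dependence the paper leaves implicit in its $c_i$'s. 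Both proofs share the same caveat that the $\mathcal{O}(n/N)$ constant is data-dependent, so neither is weaker on that score.
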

To explore whether this result holds more broadly than just in the independent case, we repeated the simulations in Figure \ref{fig:likcomparison}(\subref{fig:below2}) for a slightly modified version of Example \textcolor{black}{\ref{ex:ind}} with a tridiagonal forecast covariance matrix $\bfSigma$. This produced a plot very similar to Figure \ref{fig:likcomparison}(\subref{fig:below2}).

If the true forecast distribution, and thus also the true likelihood, is non-Gaussian, the EnKF likelihood converges to the Dawid--Sebastiani scoring rule \citep{Dawid1999} for $N \rightarrow \infty$, and so, from a frequentist perspective, parameter inference based on the EnKF likelihood at a single time point can be interpreted as approximate minimum contrast estimation \citep{Birge1993,Pfanzagl1969} under the Dawid--Sebastiani score. Due to the the propriety of this scoring rule, the minimum-contrast estimator can be shown, using the proof of \citet{Wald1949}, to be consistent under regularity conditions \citep[cf.][Sect.~3.4.2]{Gneiting2014}. From a Bayesian perspective, inference based on the EnKF likelihood essentially amounts to computing the approximate ``Gibbs posterior'' using the Dawid--Sebastiani score as the empirical risk function \citep{Jiang2008,Li2013,Katzfuss2015a}. Under certain conditions, most importantly that $\hat\bftheta_t \colonequals \argmax \elik_t(\by_t|\bftheta_t,\forex_t^{(1:N)})$ is consistent and asymptotically normal as $m_t \rightarrow \infty$, it might be possible to scale and rotate posterior samples of $\bftheta_t$ to reflect the true asymptotic uncertainty in the frequentist sense \citep{Shaby2014}, but this will be pursued in future work.

%

\subsection{Integrated likelihood \label{sec:intlikelihood}}

We now extend the EnKF likelihood for known $\by_t$ from Section \ref{sec:elik} to the full HSSM with the general transformation layer \eqref{eq:transmodel}: $\bz_t | \by_t, \bftheta_t \sim f_t(\bz_t|\by_t,\bftheta_t)$. This \emph{integrated EnKF likelihood} of the data $\bz_t$ at time $t$, with $\bx_t$ and $\by_t$ integrated out, is an important component of some of the algorithms described later. Using \eqref{eq:jointprior}, we have
\begin{align}
\textcolor{black}{p(\bz_t | \bftheta_t, \bftheta_{1:t-1}^{(i)}, \bz_{1:t-1})}
  & = \int f_t(\bz_t|\by_t,\bftheta_t) \int \normal_{m_t}(\by_t|\bH_t\bx_t,\bR_t) \textcolor{black}{p(\bx_t|\bftheta_t, \bftheta_{1:t-1}^{(i)},\bz_{1:t-1})} d\bx_t d\by_t \nonumber \\
  & = \int f_t(\bz_t|\by_t,\bftheta_t) \elik_t(\by_t|\bftheta_t,\forex_{t|t-1}^{(i,1:N)}) d\by_t \equalscolon \mlik_t(\bz_t|\bftheta_t,\forex_{t|t-1}^{(i,1:N)}), \label{eq:enkflik}
 \end{align}
where $\elik_t$ is defined in \eqref{eq:filteringlikdef}, and $\bH_t$, $\bR_t$, and $\widehat{\bfmu}_{t|t-1}^{(i)}$ and $\widehat{\bfSigma}_{t|t-1}^{(i)}$ may depend on $\bftheta_t$.

Evaluating this likelihood requires carrying out an integral over the $m_t$-dimensional vector $\by_t$. Strategies for doing so depend on the choice of transformation distribution $f_t$. If the data $\bz_t$ are Gaussian and the transformation is simply the identity, the integral degenerates and analytical integration is trivial. If it is possible to sample from \textcolor{black}{$p(\by_t|\bz_t,\bx_t,\bftheta_t)$}, the integral can be approximated using these posterior samples \citep[e.g.,][]{Chib2001,Raftery2007}. Finally, if $f_t(\bz_t|\by_t,\bftheta_t)$ is a \textcolor{black}{continuous} function in $\by_t$, a Laplace approximation \citep{Tierney1986} can be used to approximate the integral. As we are already approximating $\by_t$ as a latent Gaussian field in $\elik_t(\by_t|\bftheta_t,\forex_{t|t-1}^{(i,1:N)})$, the Laplace approximation of the integral should often result in negligible additional error, especially if the observations in $\bz_t$ are independent given $\by_t$ and $\bftheta_t$ \citep[e.g.,][]{Rue2009}; see Section \ref{sec:penkfexample} for an example.


\section{Extended ensemble Kalman \textcolor{black}{techniques} for hierarchical state-space models \label{sec:methods}}

For HSSMs of the form \eqref{eq:transmodel}--\eqref{eq:paramodel} with non-Gaussian observations or unknown parameters, we now propose extended \textcolor{black}{EnK techniques}. The general idea is to approximate the inference on the state vector in suitable existing Bayesian inference techniques using the EnKF or the EnKS.


\subsection{Extended ensemble Kalman filters\label{sec:eenkfs}}

We begin by discussing extended EnKFs that are suitable for filtering inference, which at each time $t$ aims to find the filtering distribution \textcolor{black}{$p(\bx_t,\bftheta_t|\bz_{1:t})$} given the data $\bz_{1:t}$ collected up to time $t$. Smoothing inference is discussed in Section \ref{sec:smoothing} below.

\subsubsection{Gibbs ensemble Kalman filter {\color{black}(GEnKF)}\label{sec:genkf}}

First, we consider a combination of an EnKF and a Gibbs sampler \citep{Geman1984,Gelfand1990} for filtering inference in HSSMs. For this algorithm we assume 
{\color{black}
that the parameters do not appear in $\evol_t$ and are temporally independent (i.e., \textcolor{black}{$p(\bftheta_t|\bftheta_{t-1})=p(\bftheta_t)$)}. As noted in Section \ref{sec:generalspecial}, this implies forecast independence,}
which enables inference based on a \emph{single ensemble} $\forex_{t|t}^{(1:M)} = \{\forex_{t|t}^{(1)},\ldots,\forex_{t|t}^{(M)}\}$ (i.e., $N=1$). At each time $t$, our proposed Gibbs ensemble Kalman filter (GEnKF) iteratively samples $\by_t$, $\bx_t$, and $\bftheta_t$ from their respective full conditional distributions (FCDs), which are proportional to the joint distribution in {\color{black}\eqref{eq:jointpriorind}. The state $\bx_t$ is sampled by drawing from the forecast ensemble and using the EnKF update, which allows application of the GEnKF to high-dimensional and nonlinear SSMs.}
\begin{framed}
\begin{algorithm}
\label{alg:genkf} \textbf{Gibbs ensemble Kalman filter (GEnKF)}\\
Starting with an initial ensemble $\{(\bftheta_0^{(i)},\bx_{0|0}^{(i)}): i=1,\ldots,M\}$, where $(\bftheta_0^{(i)},\bx_{0|0}^{(i)}) \sim \textcolor{black}{p(\bftheta_0,\bx_0)} = p_0(\bftheta_0)\normal_n(\bx_0|\bfmu_{0|0},\bfSigma_{0|0})$, do the following for each $t=1,2,\ldots$:
\begin{enumerate}
\item Obtain the forecast ensemble as $\forex_{t|t-1}^{(i)} = \evol_t(\bx_{t-1|t-1}^{(i)})$, for all $i=1,\ldots,M$.
\item Find starting values for $\by_t^{(i)}$ and $\bftheta_t^{(i)}$ for all $i=1,\ldots,M$.
\item For $i=1,\ldots,M$, iterate between the following steps until convergence:
 \begin{enumerate}
  \item Sample $\bx_{t|t}^{(i)}$ from \textcolor{black}{$p(\bx_t | \by_{t}^{(i)}, \bftheta_t^{(i)},\bz_{1:t})$} by \textcolor{black}{sampling $j$ uniformly from $\{1,\ldots,M\}$ and} computing the EnKF update $\bx_{t|t}^{(i)} = \update_t(\forex_{t|t-1}^{(\textcolor{black}{j})}|\by_{t}^{(i)},\forex_{t|t-1}^{(1:M)},\bftheta_{t}^{(i)})$ (see \eqref{eq:enkfupdate}).
  \item Sample $\by_{t}^{(i)}$ from $\textcolor{black}{p(\by_t | \bx_{t|t}^{(i)}, \bftheta_{t}^{(i)}, \bz_{1:t})} \propto  f_t(\bz_t|\by_t,\bftheta_t^{(i)}) \normal_{m_t}(\by_t|\bH_t(\bftheta_{t}^{(i)})\bx_{t|t}^{(i)},\bR_t(\bftheta_{t}^{(i)}))$.
  \item Sample $\bftheta_{t}^{(i)}$ from 
  $ \textcolor{black}{p(\bftheta_t | \by_{t}^{(i)},\bx_{t|t}^{(i)},\bz_{1:t})} \propto f_t(\bz_t|\by_t^{(i)},\bftheta_t) \allowbreak \normal_{m_t}(\by_t^{(i)}|\bH_t(\bftheta_t)\bx_{t|t}^{(i)},\bR_t(\bftheta_t)) \allowbreak \textcolor{black}{ \normal_n(\bx_t|\widehat{\bfmu}_{t|t-1},\widetilde{\bfSigma}_{t|t-1}+\bQ_t(\bftheta_t))} \allowbreak \textcolor{black}{p}(\bftheta_t)$.
 \end{enumerate}
 After convergence, $(\bx_{t|t}^{(i)},\bftheta_{t}^{(i)})$ is a joint sample from \textcolor{black}{$p(\bx_t,\bftheta_t|\bz_{1:t})$}, and $\bx_{t|t}^{(1:M)}$ is the filtering ensemble at time $t$.
\end{enumerate}
\end{algorithm}
\end{framed}

Note that iterating through Step 3(a)--(c) of Algorithm \ref{alg:genkf} can be done completely independently (i.e., in parallel) for $i=1,\ldots,M$. 
{\color{black} To sample $j$ in Step 3(a), we use a systematic sample across $i=1,\ldots,M$, so that each forecast ensemble member is chosen exactly once for each iteration of the Gibbs sampler. It can be shown under mild regularity conditions, that the closer the forecast ensemble is to multivariate normal, the closer the EnKF in Step 3(a) gets to sampling from the true FCD of $\bx_t$ as $M$ increases, and hence the closer the resulting Markov chain will be to the Markov chain that would be obtained using an exact Gibbs sampler, in terms of total variation distance \citep{Alquier2016}.}
As the GEnKF requires sampling from the FCDs of $\by_t$ and $\bftheta_t$ in (b) and (c), respectively, the algorithm is most efficient if the distributions $f_t(\bz_t|\by_t,\bftheta_t)$ and $p(\bftheta_t)$ are such that these FCDs are available in closed form. Otherwise, Metropolis updates are necessary for these variables, which is generally only feasible if the number of parameters in $\bftheta_t$ is \textcolor{black}{not too large}, and if the data in $\bz_t$ are independent given $\bx_t$.

For computational feasibility of Algorithm \ref{alg:genkf}, it is crucial to find good starting values for $\by_t$ and $\bftheta_t$, in order to minimize the required number of {\color{black} Gibbs} iterations. In some applications, it is natural to use the values from the previous time point (i.e., set the initial values as $\by_t^{(i)}=\by_{t-1}^{(i)}$ and $\bftheta_t^{(i)}=\bftheta_{t-1}^{(i)}$). Another possibility is to choose the value of $\bftheta_t$ that maximizes the integrated likelihood in \eqref{eq:enkflik}. 

{\color{black}
As an example, for the model with heavy-tailed data in Example \ref{ex:tdist}, the full-conditional distribution of $\bftheta_t$ is a closed-form inverse-Gamma distribution, and so we can apply the following special case of the GEnKF in Algorithm \ref{alg:genkf}:
\begin{example}[Robust ensemble Kalman filter (REnKF)]
\label{ex:renkf}
Starting with an initial ensemble $\bx_{0|0}^{(1:M)}$, where $\bx_{0|0}^{(i)} \sim \normal_n(\bx_0|\bfmu_{0|0},\bfSigma_{0|0})$, do the following for each $t=1,2,\ldots$:
\begin{enumerate}
\item Obtain the forecast ensemble as $\forex_{t|t-1}^{(i)} = \textcolor{black}{\evol_t(\bx_{t-1|t-1}^{(i)})}$, for all $i=1,\ldots,M$.
\item Set $\by_t = \bz_t$, and $\theta_{t,l}^{(i)}=1$ for all $i=1,\ldots,M$ and $l=1,\ldots,m_t$.
\item For $i=1,\ldots,M$, iterate between the following steps until convergence:
 \begin{enumerate}
  \item Sample $\bx_{t|t}^{(i)}$ from \textcolor{black}{$p(\bx_t | \by_{1:t}, \bftheta_t^{(i)})$} by sampling $j$ uniformly from $\{1,\ldots,M\}$ and computing the EnKF update $\bx_{t|t}^{(i)} = \update_t(\forex_{t|t-1}^{(j)}|\by_{t}^{(i)},\forex_{t|t-1}^{(1:M)},\bftheta_{t}^{(i)})$ (see \eqref{eq:enkfupdate}).
  \item Sample $\textstyle\theta_{t,l}^{(i)} | \by_{1:t}, \bx_{t|t}^{(i)} \stackrel{ind}{\sim} \ig\Big(\frac{\kappa+1}{2},\frac{\kappa}{2}+\Big( \frac{y_{t,l} - (\bH_t \bx_{t|t}^{(i)})_l}{\textcolor{black}{\sigma_t}} \Big)^2/2\Big)$, $l=1,\ldots,m_t$.
 \end{enumerate}
\end{enumerate}
If $\kappa$ is unknown with prior \textcolor{black}{$p(\kappa)$}, it could be sampled in a collapsed Gibbs sampling scheme prior to updating $\bftheta_t$ by sampling from \textcolor{black}{$p(\kappa|\by_t, \bx_{t|t}^{(i)}) \propto p(\kappa) \prod_{l=1}^{m_t} t_\kappa((y_{t,l} - (\bH_t \bx_{t|t}^{(i)})_l)/\sigma_t)$}.
\end{example}
Note that the iterative Huber-EnKF \citep{Rao2015} seems to be similar in spirit to our robust EnKF, but it uses the Huber norm in place of the loss function implied by a $t$-distribution.

The GEnKF can also be applied to threshold models as in Example \ref{ex:rainfall}:
\begin{example}[GEnKF for rainfall threshold model]
\label{ex:rainfallenkf}
Starting with an initial ensemble $\bx_{0|0}^{(1:M)}$, where $\bx_{0|0}^{(i)} \sim \normal_n(\bx_0|\bfmu_{0|0},\bfSigma_{0|0})$, do the following for each $t=1,2,\ldots$:
\begin{enumerate}
\item Obtain the forecast ensemble as $\forex_{t|t-1}^{(i)} = \evol_t(\bx_{t-1|t-1}^{(i)})$, for all $i=1,\ldots,M$.
\item Find starting values for $\by_t^{(i)}$ for all $i=1,\ldots,M$.
\item For $i=1,\ldots,M$, iterate between the following steps until convergence:
 \begin{enumerate}
  \item Sample $\bx_{t|t}^{(i)}$ from \textcolor{black}{$p(\bx_t | \by_{1:t}^{(i)})$} by sampling $j$ uniformly from $\{1,\ldots,M\}$ and computing the EnKF update $\bx_{t|t}^{(i)} = \update_t(\forex_{t|t-1}^{(j)}|\by_{t}^{(i)},\forex_{t|t-1}^{(1:M)})$ (see \eqref{eq:enkfupdate}).
  \item Sample $\theta_t$ from $\textcolor{black}{p(\theta_t | \bz_t,\bx_t)} \propto \textcolor{black}{p(\theta_t)} \prod_{l: z_{t,l}>0} \normal\big(z_{t,l}^{1/\theta_t}| (\bH_t \bx_t)_l,\textcolor{black}{\sigma_t^2}\big) \, z_{t,l}^{(1/\theta_t)-1}/\theta_t$. 
  \item For $l=1,\ldots,m_t$, sample from $\textcolor{black}{p(y_{t,l}|z_{t,l},\bx_t,\theta_t)} = \begin{cases} \delta_{z_{t,l}^{1/\theta_t}}(y_{t,l}), &  z_{t,l}>0, \\ \normal^{-}\big((\bH_t \bx_t^{(i)})_l,\sigma_t^2\big), &  z_{t,l}=0, \end{cases}$ \\
  where $\normal^{-}$ denotes a truncated normal distribution on the negative real line.
 \end{enumerate}
\end{enumerate}
\end{example}
}


A Gibbs ensemble Kalman filter could in principle also be applied if forecast independence does not hold. This is difficult, however, because
$
\textcolor{black}{p(\bftheta_t, \bx_t | \bz_{1:t}) = \int p(\bftheta_t, \bx_t | \bftheta_{t-1}, \bz_{1:t}) dP(\bftheta_{t-1}|\bz_{1:t}),}
$
where it is straightforward to sample from \textcolor{black}{$p(\bftheta_t, \bx_t | \bftheta_{t-1}, \bz_{1:t})$} using a Gibbs sampler \textcolor{black}{similar to} above, but it is highly challenging to compute the probability or weight $P(\bftheta_{t-1}=\bftheta_{t-1}^{(i)} |\bz_{1:t})$, which will also degenerate in high dimensions. Thus we will not pursue this general case here.

Perhaps the approaches in the literature most closely related to our GEnKF are \citet{Myrseth2010} and \citet{Tsyrulnikov2015}, who model the forecast covariance matrix as random with an inverse-Wishart prior. This matrix is not a model parameter, which makes Bayesian inference difficult. \citet{Myrseth2010} propose a Gibbs-like algorithm with a single iteration, but they do not take the data into account when calculating the posterior of the forecast covariance. 
\citet{Tsyrulnikov2015} seem to ignore the fact that forecast independence does not hold in their model.


\subsubsection{Particle ensemble Kalman filter {\color{black}(PEnKF)} \label{sec:penkf}}

For high-dimensional SSMs, particle filters typically degenerate \citep[e.g.,][]{Doucet2001,Snyd:Beng:Bick:08}. While Rao-Blackwellized particle filters \citep[RBPFs;][]{Doucet2000} can scale to higher dimensions by integrating out part of the state analytically, for example using a Kalman filter \citep{Chen2000}, the integral is generally not available for HSSMs with nonlinear evolution or non-Gaussian data. An exact approximation of RBPFs proposed in \citet{Johansen2012} uses ``local'' particle filters to carry out the integration, thus employing the particle likelihood from Section \ref{sec:pflik}, which was shown to scale poorly as well.

As an alternative, we propose here a \emph{particle ensemble Kalman filter (PEnKF)}, which uses the EnKF likelihood from Sections \ref{sec:elik} and \ref{sec:intlikelihood} to approximately integrate out the high-dimensional state. The PEnKF is suitable for HSSMs for which forecast independence does not hold, or for which efficient sampling from the FCDs in Step 3 of Algorithm \ref{alg:genkf} is not possible. As a particle filter is applied to $\bftheta_t$, the PEnKF will work best if the number of unknown parameters is not too high.  Although this algorithm does not require forecast independence for the parameters as the GEnKF above, forecast independence does result in considerable simplifications (see below).

\begin{framed}
\begin{algorithm}
\label{alg:penkf} \textbf{Particle ensemble Kalman filter (PEnKF)}\\
Initialize the algorithm with an equally weighted ensemble $(\bftheta_0^{(i)},\bx_{0|0}^{(i,j)}) \sim \textcolor{black}{p(\bftheta_0,\bx_0)}= \textcolor{black}{p}(\bftheta_0)\normal_n(\bx_0|\bfmu_{0|0},\bfSigma_{0|0})$ with $w_0^{(i)}=1/M$, $i=1,\ldots,M; j=1,\ldots,N$. Then, for $t=1,2,\ldots$:
\begin{enumerate}
\item For $i=1,\ldots,M$:
 \begin{enumerate}
  \item Sample a particle $\bftheta_{t}^{(i)}$ from a proposal distribution $q_t(\bftheta_t|\bftheta_{t-1}^{(i)})$.
  \item Propagate the corresponding ensemble: $\forex_{t|t-1}^{(i,j)} = \evol_t(\bx_{t-1|t-1}^{(i,j)};\bftheta_t^{(i)})$, $j=1,\ldots,N$.
 \item Calculate the particle weight:\\
 $w^{(i)}_t \propto w^{(i)}_{t-1} \mlik_t(\bz_t|\bftheta_t^{(i)},\forex_{t|t-1}^{(i,1:N)}) p(\bftheta_{t}^{(i)} |\bftheta_{t-1}^{(i)})/ q_t(\bftheta_{t}^{(i)} |\bftheta_{t-1}^{(i)})$.
  \item Generate $\bx_{t|t}^{(i,j)}$ from \textcolor{black}{$p(\bx_t|\bz_{1:t},\bftheta_{1:t}^{(i)})$} for $j=1,\ldots,N$ (see below).
 \end{enumerate}
\item The filtering distribution is given by\\
 $\textcolor{black}{p(\bftheta_t,\bx_t| \bz_{1:t})} \approx \sum_{i=1}^M w_t^{(i)} \textstyle\frac{1}{N} \sum_{j=1}^{N} \delta_{(\bftheta_{t}^{(i)},\bx_{t|t}^{(i,j)})}(\bftheta_t,\bx_t)$.
\item If desired, resample $(\bftheta_{t}^{(i)},\bx_{t|t}^{(i,1:N)})$ to obtain an equally weighted sample \citep[see, e.g.,][for a comparison of resampling schemes]{Douc2005}. 
\end{enumerate}
\end{algorithm}
\end{framed}
By noting that \textcolor{black}{$p(\bftheta_{1:t},\bz_{1:t})=p(\bftheta_{t},\bz_{t}|\bftheta_{1:t-1},\bz_{1:t-1})p(\bftheta_{1:t-1},\bz_{1:t-1})$}, the expression for the weights in Step 1(c) is derived as follows:
\[
w^{(i)}_t \propto \frac{w^{(i)}_{t-1}}{q_t(\bftheta_{t}^{(i)} |\bftheta_{t-1}^{(i)})} \frac{\textcolor{black}{p(\bftheta_{1:t}^{(i)},\bz_{1:t})}}{\textcolor{black}{p(\bftheta_{1:t-1}^{(i)},\bz_{1:t-1})}}  = \frac{w^{(i)}_{t-1}}{q_t(\bftheta_{t}^{(i)} |\bftheta_{t-1}^{(i)})} \textcolor{black}{p(\bftheta_{t}^{(i)},\bz_{t}|\bftheta_{1:t-1}^{(i)},\bz_{1:t-1})},
\]
where $\textcolor{black}{p(\bftheta_{t},\bz_{t}|\bftheta_{1:t-1},\bz_{1:t-1})=p(\bftheta_t|\bftheta_{1:t-1},\bz_{1:t-1})p(\bz_{t}|\bftheta_{1:t},\bz_{1:t-1})=p(\bftheta_{t} |\bftheta_{t-1}) p(\bz_{t}|\bftheta_{1:t},\bz_{1:t-1})}$, and $\textcolor{black}{p(\bz_{t}|\bftheta_{1:t}^{(i)},\bz_{1:t-1})}$ is \textcolor{black}{approximated by} the integrated likelihood in \eqref{eq:enkflik}.

Thus, the PEnKF is similar to the mixture KF \citep{Chen2000} and a Rao-Blackwellized particle filter, except that the high-dimensional state $\bx$ is integrated out and sampled using the EnKF approximation. This again allows application to high-dimensional and nonlinear SSMs.

{\color{black}
In the numerical examples in Section \ref{sec:examples}, for simplicity we set the proposal distributions equal to the prior distributions implied by the model: $q_t(\bftheta_{t} |\bftheta_{t-1}) = p(\bftheta_{t} |\bftheta_{t-1})$. Of course, it might be possible to adapt existing, more sophisticated strategies for choosing efficient proposal distributions to our setting.
}

For Step 2 in Algorithm \ref{alg:penkf}, a sample from \textcolor{black}{$p(\bx_t|\bz_{1:t},\bftheta_{1:t}^{(i)})$} needs to be computed for each particle with significantly nonzero weight $w_t^{(i)}$. This sample can be obtained using a special case of the Gibbs sampler in Algorithm \ref{alg:genkf} with $\bftheta_t=\bftheta_t^{(i)}$ held fixed (i.e., skipping Step 3(c)). 
In settings for which the Laplace approximation works well (see Section \ref{sec:intlikelihood}), this Gibbs sampler could consist of only a single iteration, in which $\by_t$ is first sampled from the Gaussian approximation to its FCD used in the Laplace approximation, and then $\bx_t$ is sampled as in Step 3(a) of Algorithm \ref{alg:genkf}. 
{\color{black} Here is an example of a PEnKF using this Laplace approach for the Poisson model in Example \ref{ex:poisson}:
\begin{example}[PEnKF for dynamic Poisson model]
\label{ex:poissonpenkf}
Initialize $(\bftheta_0^{(i)},\bx_{0|0}^{(i,j)}) \sim p(\bftheta_0)\normal_n(\bx_0|\bfmu_{0|0},\bfSigma_{0|0})$ and $w_0^{(i)}=1/M$, $i=1,\ldots,M; j=1,\ldots,N$. Then, for $t=1,2,\ldots$, and each $i=1,\ldots,M$:
\begin{enumerate}
\item Sample $\bftheta_{t}^{(i)}$ from $\normal_6(\bftheta_{t-1}^{(i)},.05^2\bI_6)$.
\item Propagate the corresponding ensemble: $\forex_{t|t-1}^{(i,j)} = \levol_t(\bftheta_{t}^{(i)}) \, \bx_{t-1|t-1}^{(i,j)}$, $j=1,\ldots,N$.
\item Using Newton-Raphson, find the mode $\bfnu_t^{(i)}$, and the negative inverse Hessian $\bfLambda_t^{(i)}$ at the mode, of the function $\log g_t^{(i)}(\by)$, where $g_t^{(i)}(\by) = \elik_t(\by|\bftheta_t^{(i)},\forex_{t|t-1}^{(i,1:N)}) \prod_{l=1}^{m_t} \mathcal{P}ois(z_{t,l} | e^{y_{l}})$.
 \item Calculate the particle weight $w^{(i)}_t \propto w^{(i)}_{t-1} \mlik_t(\bz_t|\bftheta_t^{(i)},\forex_{t|t-1}^{(i,1:N)})$, where $\mlik_t(\bz_t|\bftheta_t^{(i)},\forex_{t|t-1}^{(i,1:N)}) =$ \\ $g_t^{(i)}(\bfnu_t^{(i)})/ \normal_{m_t}(\bfnu_t^{(i)}|\bfnu_t^{(i)},\bfLambda_t^{(i)})$.
\item Ensemble update: $\bx_{t|t}^{(i,j)} = \update_t(\forex_{t|t-1}^{(i,j)}|\by_{t}^{(i)},\forex_{t|t-1}^{(i,1:N)},\bftheta_{t}^{(i)})$, $j=1,\ldots,N$, where $\by_{t}^{(i)} \sim \normal(\bfnu_t^{(i)},\bfLambda_t^{(i)})$.
\item Resample $(\bftheta_{t}^{(i)},\bx_{t|t}^{(i,1:N)})$ with probability proportional to $w^{(i)}_t$; then set $w^{(i)}_t=1/M$.
\end{enumerate}
\end{example}
}


%
%

Algorithm \ref{alg:penkf} is, in principle, also applicable when $\bftheta_t \equiv \bftheta$ is constant over time (i.e., the prior distribution is a point mass at the previous value). However, in this case the proposal distribution also has to be a point mass, which means that over time there will be fewer and fewer unique particles. To avoid this, the particles can be resampled based on a kernel density estimate of the filtering distribution \citep{LiuWest:01,Frei:Kuns:12}, but then the state ensemble has to be propagated and updated again based on the newly sampled parameter values.

As described in Section \ref{sec:generalspecial}, in the case of \emph{forecast independence}, the likelihood can be computed based on the entire filtering ensemble from $t-1$, and so we can set $N=1$ and carry out Algorithm \ref{alg:penkf} with a \emph{single ensemble} instead of an ensemble of ensembles. This important special case is described in Algorithm 1 of \citet{Frei:Kuns:12}. If $\evol_t$ does not depend on $\bftheta_t$ but is computationally expensive, it can then be advantageous to ``oversample'' the parameters, so that there are more parameter particles than state ensemble members.

%
%

\subsection{\textcolor{black}{\textcolor{black}{Smoothing using} extended ensemble Kalman techniques} \label{sec:smoothing}}

We now discuss extended EnKS methods for HSSMs when smoothing inference is desired; that is, when interest is in \textcolor{black}{$p(\bx_{1:T},\bftheta_{1:T} | \bz_{1:T})$} for some fixed time period $\{1,\ldots,T\}$.

\subsubsection{Smoothing inference for high-dimensional parameter vectors}

If the parameters and $\by_t$ are high-dimensional but they can be efficiently sampled from their full-conditional distributions (FCDs), a Gibbs-EnKS can be applied:
\begin{framed}
\begin{algorithm}
\label{alg:genks} \textbf{Gibbs ensemble Kalman smoother (GEnKS)}
\begin{enumerate}
\item Initialize {\color{black}$\by_{1:T}^{(0)}$} and {\color{black}$\bftheta_{1:T}^{(0)}$}.
\item Iterate between following steps {\color{black}for $i=1,2,\ldots$} until convergence:
\begin{enumerate}
\item Obtain samples ${\color{black}\bx_{1:T|T}^{(i,1:N)}}$ from $\textcolor{black}{p(\bx_{1:T}|\bftheta_{1:T}^{(i-1)},\by_{1:T}^{(i-1)})}$ using the EnKS (Algorithm \ref{alg:enks}), and set ${\color{black}\bx_{1:T|T}^{(i)} = \bx_{1:T|T}^{(i,j)}}$ for a $j$ sampled uniformly at random from $\{1,\ldots,N\}$.
\item Draw a sample {\color{black}$\by_{1:T}^{(i)}$ from $\textcolor{black}{p(\by_{1:T}|\bz_{1:T},\bx_{1:T|T}^{(i)},\bftheta_{1:T}^{(i-1)})}$}.
\item Draw a sample {\color{black}$\bftheta_{1:T}^{(i)}$ from $\textcolor{black}{p(\bftheta_{1:T}|\bx_{1:T|T}^{(i)},\by_{1:T}^{(i)},\bz_{1:T})}$}.
\end{enumerate}
\end{enumerate}
\end{algorithm}
\end{framed}


The GEnKS can be interpreted as an alternative version of the Gibbs-KF of \citet{Carter1994} or the particle-Gibbs algorithm of \citet{Andrieu2010}, where the Kalman or particle filter, respectively, is replaced by the EnKF. This allows the GEnKS to be applied to high-dimensional, nonlinear SSMs. 
{\color{black} In Section \ref{sec:genksex}, we conduct a numerical comparison using the Lorenz-96 model from Example \ref{ex:lorenz} between particle Gibbs and the following GEnKS:
\begin{example}[GEnKS for Lorenz-96 with static parameter]
\label{ex:lorenzgenks}
Initialize $\theta^{(0)}$, set $\by_{1:T} = \bz_{1:T}$, and then iterate between following steps for $i=1,2,\ldots$ until convergence:
\begin{enumerate}
\item Obtain samples $\bx_{1:T|T}^{(i,1:N)}$ from $p(\bx_{1:T}|\theta^{(i-1)},\by_{1:T})$ using the EnKS (Algorithm \ref{alg:enks}), and set $\bx_{1:T|T}^{(i)} = \bx_{1:T|T}^{(i,j)}$ for a $j$ sampled uniformly from $\{1,\ldots,N\}$.
\item With $\forex_{t|T}^{(i)} = \lor_{8, 0.2}(\bx_{t-1|T}^{(i)})$, draw a sample $\theta^{(i)}$ from 
\begin{equation}
\label{eq:lorenzfcd}
\textstyle\normal\Big(\frac{\mu_\theta/\sigma_\theta^2+\sum_{t=2}^T \forex_{t|T}^{(i)}{}'\bQ_t^{-1}\bx_{t-1|T}^{(i)}}{1/\sigma_\theta^2+\sum_{t=2}^T \forex_{t|T}^{(i)}{}'\bQ_t^{-1}\forex_{t|T}^{(i)}},\frac{1}{1/\sigma_\theta^2+\sum_{t=2}^T \forex_{t|T}^{(i)}{}'\bQ_t^{-1}\forex_{t|T}^{(i)}}\Big).
\end{equation}
\end{enumerate}
\end{example}
}

\subsubsection{Smoothing inference for low-dimensional parameters}

If $\bftheta_{1:T}$ cannot be sampled efficiently from its FCD in Step 1(c) of Algorithm \ref{alg:genks} but $\bftheta_t$ is low-dimensional, the following smoothing algorithms could be applied.

\paragraph{Static parameters}

First, consider the case of a static (i.e., temporally constant) low-dimensional parameter vector. In this case, it is possible to apply a \emph{Metropolis-Hastings-EnKS (MHEnKS)}. This algorithm consists of proposing a new parameter value, $\bftheta^P$ say, and then accepting this value with a probability depending on the likelihood
\begin{equation}
\label{eq:smoothinglik}
\textcolor{black}{\textstyle p(\bz_{1:T} | \bftheta) = \prod_{t=1}^T p(\bz_t|\bz_{1:t-1},\bftheta) = \prod_{t=1}^T \mlik_t(\bz_t|\bftheta,\forex_{t|t-1}^{(1:N)})}
\end{equation}
for $\bftheta=\bftheta^P$, where the terms in the product on the right-hand side are the integrated likelihoods at time $t$ in \eqref{eq:enkflik}, which can be obtained sequentially by carrying out an EnKF and computing the integrated likelihood at each time $t$. Then, we can obtain samples from \textcolor{black}{$p(\bx_{1:T}|\bz_{1:T},\bftheta^{(i)})$} for each thinned posterior sample $\bftheta^{(i)}$ using a special case of Algorithm \ref{alg:genks}, in which we hold $\bftheta_t=\bftheta^{(i)}$ fixed and skip Step 2(c). If the evolution is linear, it is also possible to reuse the filtering ensemble $\bx_{T|T}^{(1:N)}$ already obtained in the EnKF by applying the EnKS from \citet{Stro:Stei:Lesh:10}.

The MHEnKS can be interpreted as an alternative version of the particle marginal Metropolis-Hastings algorithm of \citet{Andrieu2010}, where the particle likelihood in \eqref{eq:pflik} is replaced by the EnKF likelihood in \eqref{eq:filteringlikdef}. The MHEnKS should hence scale to higher state dimensions. 


\paragraph{Time-varying parameters}

If $\bftheta_t$ varies over time, inference becomes more challenging, in that even if each $\bftheta_t$ consists of a small number of elements, the combined vector of unknown parameters at all time points, $\bftheta_{1:T}$, can still be rather high-dimensional if $T$ is large. Hence, for a low-dimensional time-varying parameter vector, we propose the following particle-EnKS:
\begin{framed}
\begin{algorithm}
\label{alg:penks} \textbf{Particle ensemble Kalman smoother (PEnKS)}
\begin{enumerate}
\item Obtain smoothing trajectories $\bftheta_{1:T}^{(1:M)}$ and corresponding weights $w_{T}^{(1:M)}$ by running the PEnKF in Algorithm \ref{alg:penkf} up to time $T$, and saving and potentially resampling the trajectories $(\bftheta_{1:t}^{(i)},\bx_{1:t|t}^{(i,1:N)})$ at each time $t$ \citep{Kita:96}.
\item If desired, improve degeneracy by obtaining new state trajectories $\bx_{1:T|T}^{(i,1:N)}$ from \textcolor{black}{$p(\bx_{1:T}|\by_{1:T},\bftheta_{1:T}^{(i)})$} for each particle trajectory $\bftheta_{1:T}^{(i)}$ with significantly nonzero weight $w_T^{(i)}$, using a special case of Algorithm \ref{alg:genks}, in which we hold $\bftheta_{1:T}=\bftheta_{1:T}^{(i)}$ fixed and skip Step 2(c).
\end{enumerate}
\end{algorithm}
\end{framed}
Note that this algorithm might perform poorly when $T$ is large, because of degeneracy problems at ``early'' time points \citep[e.g.,][]{Doucet2000a}. For Step 1, we can obtain the same simplifications in the case of forecast independence as for the ``stand-alone'' PEnKF described in Section \ref{sec:penkf}. For Step 2, we can re-use state ensembles in the case of linear evolution using the EnKS procedure of \citet{Stro:Stei:Lesh:10}.

\subsubsection{A more general Markov chain Monte Carlo -- \textcolor{black}{ensemble Kalman smoother} \label{sec:mcmc}}

An extension of the Gibbs-EnKS in Algorithm \ref{alg:genks}, which we will call an MCMC--EnKS, is possible as long as $\bftheta_t$ can be split into a set of parameters $\bftheta_{1:T} = (\bftheta_{1:T}^{(1)}{}',\ldots,\bftheta_{1:T}^{(J)}{}')'$, where each $\bftheta_{1:T}^{(j)}$ is itself either low-dimensional or has a closed-form full-conditional distribution, and can be updated as in the Metropolis-Hastings-, Gibbs-, or particle-EnKS. This allows the fitting HSSMs with a variety of unknown parameters.


\subsection{Properties of the extended \textcolor{black}{ensemble Kalman} algorithms}

\subsubsection{Convergence}

{\color{black}
\begin{prop}
\label{prop:convergence}
If the evolution operators $\evol_t$ are linear and $\widehat{\bfSigma}_{t|t-1}$ is a consistent estimator of the forecast covariance matrix, the empirical distributions for the states and parameters produced by the PEnKF, GEnKS, MHEnKS, and PEnKS all converge to the true filtering or smoothing distributions under mild regularity conditions, as $N$, $M$, and the number of MCMC iterations all tend to infinity.
\end{prop}
Our informal proof is derivative of existing results and provided in Appendix \ref{app:proofs}. Note that $\widehat{\bfSigma}_{t|t-1}$ is consistent if the tapering range in Section \ref{sec:tapering} tends to infinity as $N \rightarrow \infty$.
The convergence in Proposition \ref{prop:convergence} does not hold exactly in the case of the GEnKF, in that the marginal filtering distribution of $\bx_t$ integrated over $\bftheta_t$ and $\by_t$ and the forecast distribution of $\bx_{t+1}$ are not exactly Gaussian anymore, but the GEnKF treats them as such. 
}

For SSMs with nonlinear evolution, the EnKF essentially converges to the best filter among those which update the state linearly based on the data at each time point \citep[e.g.,][]{LeGland2011,Law2014}. 
{\color{black}This is similar to the notions in so-called linear Bayes estimation that considers linear estimators given only the  first and second moments of the prior distribution and likelihood \citep[e.g,][]{hartigan1969linear,goldstein2007bayes}.  In addition, there is a close connection between approximate Bayes computation (ABC) and linear Bayes estimation \citep[e.g.,][]{Nott2014} and the EnKF can be shown to be equivalent to special cases of ABC algorithms \citep{Nott2012}. 
}
As argued in Section \ref{sec:elik}, the EnKF likelihood converges to the Dawid-Sebastiani score, which is a proper scoring rule. Thus, for nonlinear evolution the extended EnKF and EnKS techniques will not produce exact inference, but the resulting samples of the state and parameters can still provide useful approximations to the desired distributions in many settings.

For the high-dimensional applications of interest here, only small ensembles are computationally feasible, and so the algorithms' performance for \emph{small} $N$ and $M$ is much more important than their asymptotic properties for $N$ and $M$ tending to infinity. Our algorithms will tend to perform well for HSSMs for which the embedded SSM \eqref{eq:obsmodel}--\eqref{eq:evomodel} is well suited for inference using the EnKF and EnKS. This is the case for many real-world applications \citep[e.g.,][]{Houtekamer2005,Bonavita2010,Houtekamer2014}. We have also seen in Section \ref{sec:likapprox}, that the EnKF approximation to the likelihood can be much less variable than the particle approximation, which should lead to better algorithms. Further numerical comparisons are provided in Section \ref{sec:examples} below.

\subsubsection{Computational cost}

The EnKF requires storing, propagating, and operating on $N$ vectors (i.e., ensemble members) of length $n$, and the computational complexity of each EnKF update is $\mathcal{O}(nN^2)$ for most EnKF variants \citep[e.g.,][]{Tipp:Ande:Bish:03}. The computations at each time step are highly parallelizable.

For our extended EnKF algorithms, the EnKF generally has to be carried out several times. However, often only a small number of iterations or particles is necessary to obtain a reasonable approximation to the desired distribution (see Section \ref{sec:examples} below).  {\color{black} Note that in machine learning, often only a few or even a single update (called contrastive divergence)  is used to sample from restricted Boltzman machine Markov random fields for training deep neural networks \citep{Hinton2002}.  Although this adds bias, empirical and some theoretical support suggests that it is an effective and efficient approach in the deep learning context \citep[e.g.,][]{Bengio2009}.}  In addition, in the case of forecast independence for the GEnKF and PEnKF, only the update step has to be performed several times, while the forecast step only has to be carried out for a single ensemble. Thus, considering that the most expensive part of the EnKF is often the forward propagation using an expensive evolution model $\evol_t$, the increased computational cost of the GEnKF or PEnKF relative to the EnKF can be negligible.

{\color{black} 
Thus, the EnKF and our proposed extensions are \emph{scalable} to very high state dimensions $n$ and large numbers of observations $m_t$. Parameters $\bftheta_t$ that cannot be included in the state can also be high-dimensional if closed-form full-conditional distributions are available, so that our Gibbs ensemble Kalman techniques can be applied. 
However, operational, large-scale implementation for millions of dimensions or more is a significant undertaking. While efficient and parallel existing EnKF implementations such as the data assimilation research testbed, or DART \citep{Anderson2009a}, should be suitable for the proposed extensions, the application of EnKF in operational, large-scale environments typically requires serious high-performance-computing expertise and extensive tuning of inflation and localization parameters.
}

\section{Examples and numerical comparisons \label{sec:examples}}

The extended \textcolor{black}{EnK} methods described in Section \ref{sec:methods} are very general and can be applied in many situations. {\color{black}We now present numerical comparisons for the examples in Section \ref{sec:allexamples}.}


{\color{black}
\subsection{Simulation study for non-Gaussian observations \label{sec:nongaussobs}}

\begin{figure}[!tb]
\centering
	\begin{subfigure}{1\textwidth}
\centering
\includegraphics[width=.8\linewidth]{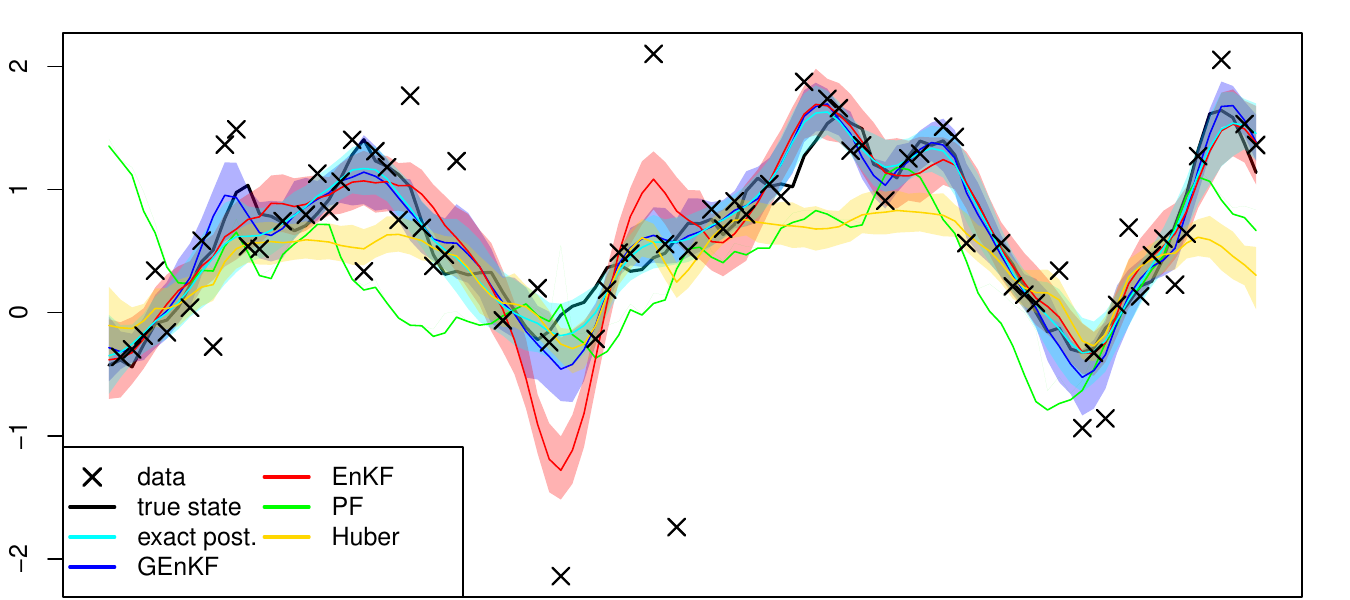} 
\caption{{\color{black}Simulated} heavy-tailed data with $t_2$ errors}
\label{fig:robust}
	\end{subfigure} \\
	\vspace{1mm}
	\begin{subfigure}{1\textwidth}
\centering
\includegraphics[width=.8\linewidth]{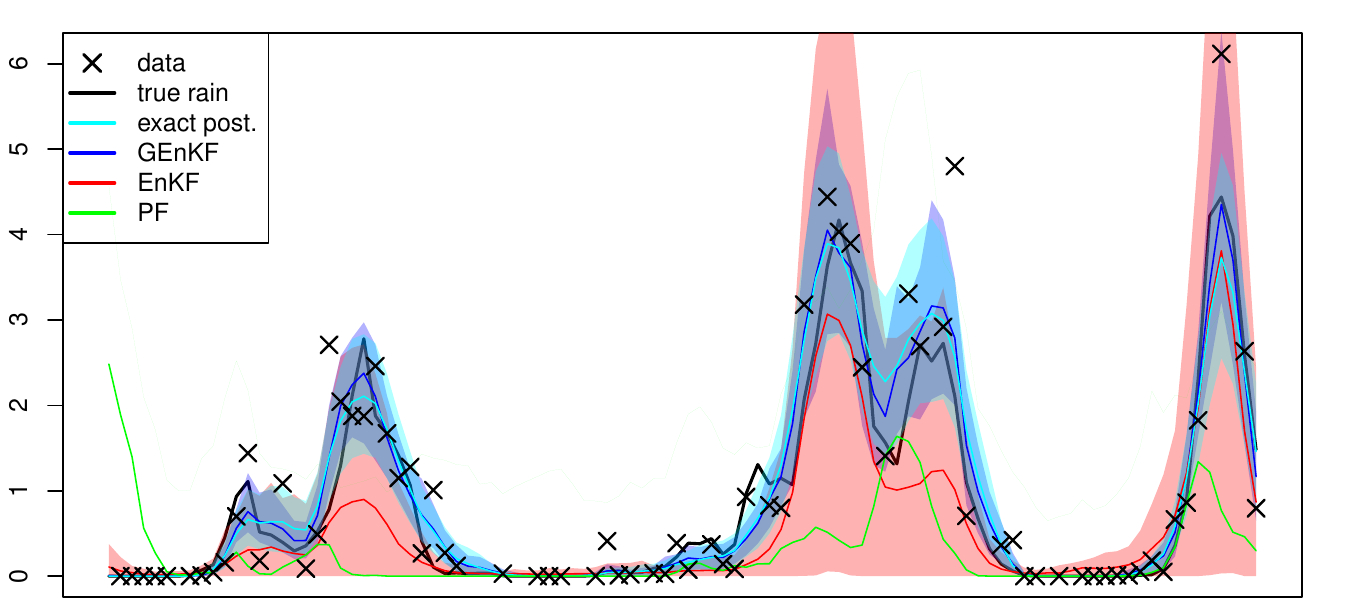}
\caption{{\color{black}Simulated} rainfall amounts with known \textcolor{black}{$\theta=3$}}
\label{fig:rain}
	\end{subfigure}%
\caption{
{\color{black}
Simulated non-Gaussian data on a one-dimensional spatial domain, together with posterior means (solid lines) and pointwise posterior 80\% credible intervals (shaded bands) as obtained by various methods
}
}
\label{fig:nongaussian}
\end{figure}

First, we considered three non-Gaussian observation scenarios with temporally independent parameters with closed-form full conditionals, which can be handled using our GEnKF (Algorithm \ref{alg:genkf}). In particular, we consider \emph{heavy-tailed} observation noise from a $t$ distribution with $\kappa=2$ degrees of freedom from Example \ref{ex:tdist} (see Figure \ref{fig:robust}), and the \emph{rainfall} threshold model from Example \ref{ex:rainfall} with the true $\kappa=3$ assumed known (see Figure \ref{fig:rain}) or unknown. 

We implemented the special cases of the GEnKF presented in Examples \ref{ex:renkf} and \ref{ex:rainfallenkf} for the heavy-tailed and rainfall data, respectively. In the case of known $\theta_t$ for the rainfall data, we simply skipped Step 3(b) in Example \ref{ex:rainfallenkf}. 
The GEnKFs were compared to a standard \textcolor{black}{sequential importance resampling (SIR)} particle filter, and to the true posterior distribution obtained using an MCMC that was run for 1,000 iterations after 500 burn-in iterations.
We also considered a commonly used extension of the EnKF update \citep{Ande:03}: For an observation equation of the form $\by_t = \obs_t(\bx_{t},\bv_t)$, the pseudo-observation in \eqref{eq:enkfupdate} is then replaced by 
$
\widetilde\by_t^{(j)} = \obs_t(\bx_{t|t-1}^{(j)},\bv_t^{(j)}),
$
and the Kalman gain is estimated as $\widehat\bK_t = \widehat\bC_{xy} \widehat\bC_{yy}^{-1}$, where $\widehat\bC_{yy}$ and $\widehat\bC_{xy}$ are regularized versions of the empirical covariance matrix of $\widetilde\by_t^{(1:N)}$ and the cross-covariance matrix between $\bx_{t|t-1}^{(1:N)}$ and $\widetilde\by_t^{(1:N)}$, respectively.
For the rainfall data with unknown \textcolor{black}{$\theta$}, we used state augmentation.
For the heavy-tailed data, we also implemented the robust Huber-EnKF proposed in \citet{Roh2013}. The Huber-EnKF requires a cutoff or clipping value for each observation location, which we selected by sampling 1000 times from the true forecast distribution based on an update consisting of a single observation at the respective location.

Because non-Gaussian observations are not directly related to the evolution model and the forecast step, we focused on the update step at a single time point $t$. We simulated 100 true state vectors of size $n=100$ on a one-dimensional spatial domain $[1,100]$ with $m_t=75$ randomly chosen observation locations. The prior of the state was taken to be a normal distribution with known true mean 0.2 (constant over space) and known true covariance matrix based on a powered exponential covariance function with power 1.8 and scale parameter 10, same for all three simulation scenarios. Also, we assumed \textcolor{black}{$\sigma_t = 0.2$} to be known for all examples. 

}

For the standard EnKF, the Huber-EnKF, and the particle filter (PF), we used the same prior ensemble of size $M=100$, simulated from the true prior distribution. For our GEnKF we only used a subset of size $M=30$ of this ensemble, and only one Gibbs iteration for rainfall with known \textcolor{black}{$\theta$}, and three iterations for the other two scenarios. For the EnKF and GEnKF, we used the same Wendland taper function with tapering length 20.

\begin{table}[t]
\centering
{\color{black}
\begin{tabular}{r|rr|rr|rr}
 & \multicolumn{2}{c|}{Heavy-tailed} 
 & \multicolumn{2}{c|}{~~Rainfall (\textcolor{black}{$\theta=3$}) ~~} 
 & \multicolumn{2}{c}{Rainfall (\textcolor{black}{$\theta$} unknown)} \\ 
 & RMSPE & CRPS & RMSPE & CRPS & RMSPE & CRPS \\ 
  \hline
exact & 0.167 & 0.093 & 0.435 & 0.144 & 0.422 & 0.131 \\ 
  GEnKF & \textbf{0.185} & \textbf{0.103} & \textbf{0.452} & \textbf{0.152} & \textbf{0.939} & \textbf{0.220} \\ 
  EnKF & 0.288 & 0.149 & 26.714 & 7.427 & $>$100 & $>$100 \\ 
  PF & 0.767 & 0.586 & 2.031 & 1.042 & 3.912 & 1.491 \\ 
  Huber & 0.643 & 0.414 &  &  &  &  \\ 
\end{tabular}
\caption{\label{tab:sim}Simulation results for non-Gaussian observations: {\color{black} heavy-tailed data (Example \ref{ex:tdist}) and rainfall data (Example \ref{ex:rainfall}).} MSPE = mean squared prediction error; CRPS = continuous ranked probability score; exact = posterior distribution obtained using MCMC; PF = \textcolor{black}{SIR particle filter}; Huber = robust Huber-EnKF from \citet{Roh2013}}
}
\end{table}

The results are shown in Figure \ref{fig:nongaussian} and Table \ref{tab:sim}. We computed both the mean squared prediction error (MSPE), which evaluates the accuracy of the \textcolor{black}{posterior mean as a point prediction}, and the continous ranked probability score (CRPS), which evaluates the quality of the entire predictive distribution, including its shape and spread \citep[see, e.g.,][]{Gneiting2014}. For the heavy-tailed observations, we compared the state filtering distribution for the different methods to the true state, whereas for rainfall data we considered predictions of the true rainfall, $x_{t,l}\textcolor{black}{\mathbbm{1}_{\{x_{t,l}>0\}}}$.

The GEnKF outperformed all competitors in all three scenarios, and came relatively close to the true posterior distribution. This is especially remarkable considering the GEnKF consisted of at most three EnKF updates with ensemble size 30, while the competitors each used an ensemble of size 100. Hence, in our simulations, the GEnKF produced the most accurate results at the lowest computational expense.

{\color{black}
\begin{figure}
	\begin{subfigure}{.48\textwidth}
	\centering
	\includegraphics[width =1\linewidth]{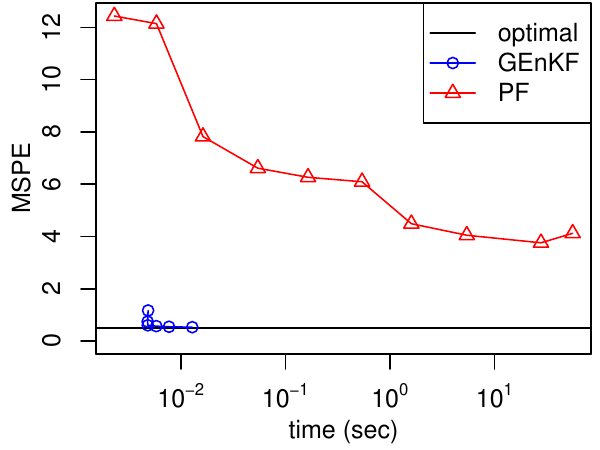} 
	\caption{$n=100$, $m=75$}
	\label{fig:timingn100}
	\end{subfigure}%
\hfill
	\begin{subfigure}{.48\textwidth}
	\centering
	\includegraphics[width =1\linewidth]{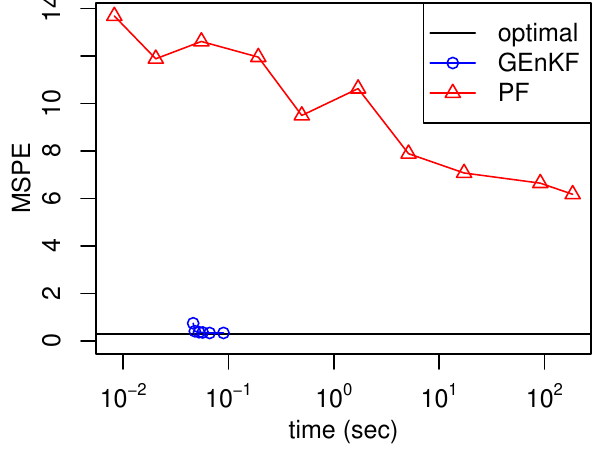}
	\caption{$n=400$, $m=300$}
	\label{fig:timingn400}
	\end{subfigure}%
\caption{{\color{black}Mean squared prediction error (MSPE) versus time for GEnKF and particle filter (PF) updates in the simulated rain example for known $\kappa=3$. Note that time on the x-axis is on a log scale.}}
\label{fig:timing}
\end{figure}

In Figure \ref{fig:timing}, we further investigated the computational costs by comparing the computation time and MSPE for the update in the GEnKF and the particle filter for varying ensemble sizes between 5 and 100 and number of particles between 30 and 1 million. The results ignore the potentially significant cost of applying evolution operators, to the benefit of the particle filter. All results are averaged over 20 repetitions. While the GEnKF obtains close to optimal MSPE even for very low computation times, the particle filter produces highly non-optimal MSPEs even for computation times that are several orders of magnitude larger. The differences become even more pronounced for larger datasets.

}

\subsection{Cloud data \label{sec:penkfexample}}

We also conducted filtering inference for a cloud-motion dataset shown in Figure \ref{fig:clouddata} from \citet{Wikl:02}, representing cloud intensities (i.e., counts) at $n=60$ locations along a spatial transect at $T=80$ time points. The data roughly follow an overdispersed Poisson distribution, \textcolor{black}{and so we assumed the model from Example \ref{ex:poisson}.} We randomly set aside 10\% of the observations as test data, so that $m_t \equiv 54$ and $\bH_t$ was a subset of the identity. \textcolor{black}{The observation locations were different for each time period.} The evolution was taken to be linear, $\evol_t(\bx_{t-1}) = \levol_t\bx_{t-1}$, where $\levol_t$ was a tridiagonal matrix with $\gamma_{1,t}$ on the main diagonal, and $\gamma_{2,t}$ and $\gamma_{3,t}$ on the first diagonals above and below, respectively. The evolution error was assumed to exhibit spatial dependence in the form of a Mat\'{e}rn covariance with smoothness 1.5, such that the $(i,j)$th element of $\bQ_t$ is given by $(\bQ_t)_{i,j} = \tau^2_t \,\matern(|i-j|/\lambda_t)$, where $\matern(d) = (1+\sqrt{3}d)\exp(-\sqrt{3}d)$. The unknown parameters were $\bftheta_t \colonequals (\gamma_{1,t},\gamma_{2,t},\gamma_{3,t},\log \sigma_t,\log \tau_t,\log \lambda_t)'$. For $t=0$, we set $\bftheta_0 = (0.3,0.3,0.3,\log(0.1),\log(1.5),\log(8))'$, $\bfmu_{0|0} = (-2) \bfone_n$, and $(\bfSigma_{0|0})_{i,j} = 0.2 \, \matern(|i-j|/5)$. Note that our model addresses the potential extension to time-varying parameters suggested in \citet{Wikl:02}.

\begin{figure}
	\begin{subfigure}{.47\textwidth}
	\centering
	\includegraphics[trim=0mm 0mm 10mm 0mm, clip, width =1\linewidth]{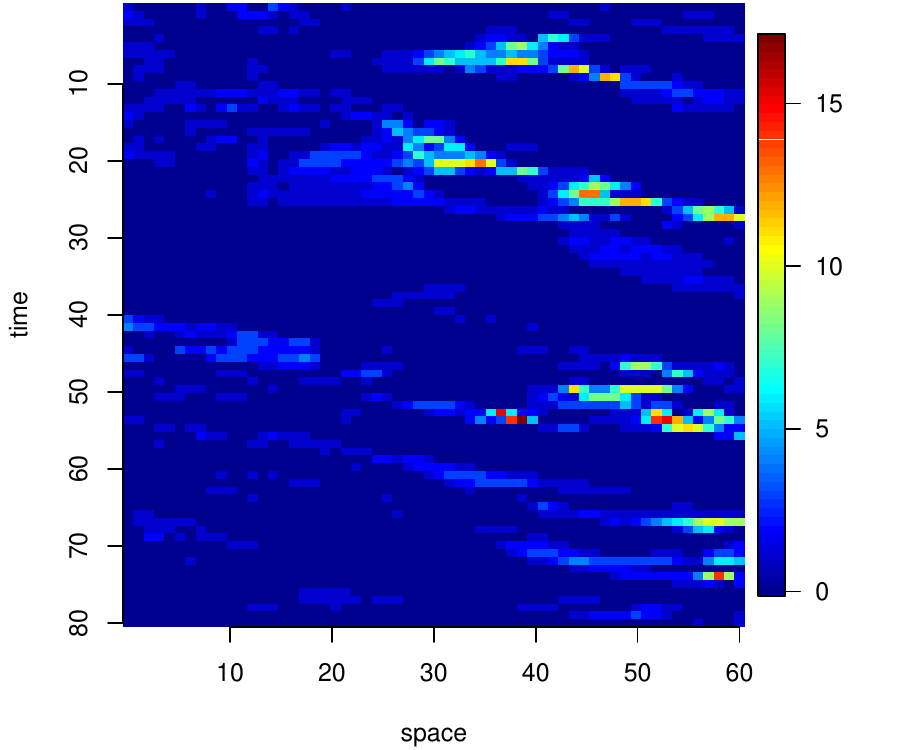} 
	\caption{Cloud intensity data}
	\label{fig:clouddata}
	\end{subfigure}%
\hfill
	\begin{subfigure}{.5\textwidth}
	\centering
	\includegraphics[width =1\linewidth]{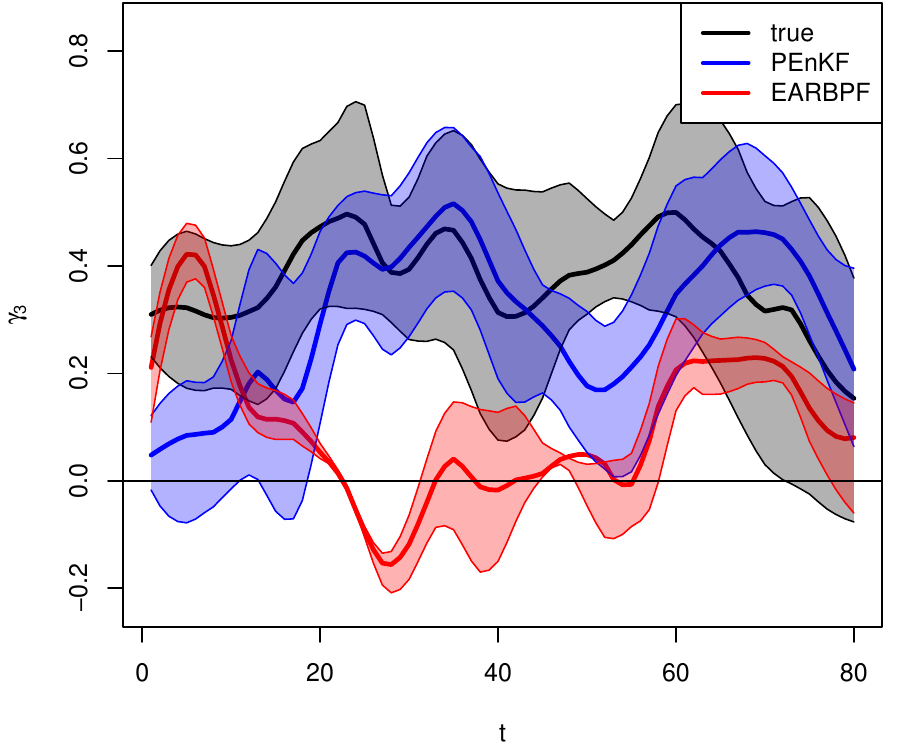}
	\caption{Filtering distributions for $\gamma_{3,t}$}
	\label{fig:cloudgamma}
	\end{subfigure}%
\caption{The cloud data in Section \ref{sec:penkfexample} (Panel (\subref{fig:clouddata})), and filtering means and 95\% credible intervals (slightly smoothed for clearness) over time for $\gamma_{3,t}$ (Panel (\subref{fig:cloudgamma}))}
\label{fig:cloud}
\end{figure}

We considered two methods for filtering inference on the state and parameters: our PEnKF from {\color{black}Example \ref{ex:poissonpenkf}}, and an exact approximation of a Rao-Blackwellized particle filter \citep[EARBPF;][]{Johansen2012}. Forecast independence does not hold for all parameters in this example, and so we \textcolor{black}{used} $M=50$ ensembles of size $N=30$. For tapering of the forecast covariance matrix (see Section \ref{sec:tapering}), we chose a Wendland taper with range 8. For the EARBPF, which is similar to the PEnKF except that it uses local particle filters instead of the EnKF and the Laplace approximation to integrate out the state for each parameter particle, we considered one setting with a similar computation time as the PEnKF ($M=50$ and $N=30$), and one setting with a much larger computation time ($M=500$ and $N=10000$) that should provide a close approximation to and will henceforth be referred to as the true filtering distribution.

\begin{table}[ht]
\centering
\begin{tabular}{r|rr|rrrrrr}
 & \multicolumn{2}{c|}{Predictions} & \multicolumn{6}{c}{EMD to true filtering distributions}  \\ 
 & MSPE & CRPS & $\gamma_1$ & $\gamma_2$ & $\gamma_3$& $\log \sigma$ & $\log \tau$ & $\log \lambda$ \\ 
  \hline
truth & 0.73 & 0.25 &  &  &  &  &  &  \\ 
	  \textbf{PEnKF} & \textbf{0.75} & \textbf{0.25} & \textbf{0.25} & \textbf{0.16} & \textbf{0.19} & \textbf{0.70} & \textbf{0.23} & \textbf{0.32} \\ 
  EARBPF & 1.26 & 0.33 & 0.27 & 0.25 & 0.24 & 0.71 & 0.32 & 0.37 \\ 
\end{tabular}
\caption{For the cloud data in Section \ref{sec:penkfexample}, mean squared prediction error (MSPE) and continuous ranked probability score (CRPS) for prediction of test data, and earth-mover's distance (EMD) to the true parameter posteriors}
\label{tab:cloud}
\end{table}

The results, averaged over ten different initial ensembles $\bx_{0|0}^{(1:M,1:N)}$, are shown in Table \ref{tab:cloud}. In terms of filtering predictions of the test data, the PEnKF performed almost as well as the true filtering distributions, and considerably better than the EARBPF. The PEnKF approximation to the filtering distributions of the six parameters in $\bftheta_t$ is also closer than the EARBPF to the true distributions in terms of earth-mover's distance \citep[computed using the R package by][]{emdist} averaged over time. As an example, Figure \ref{fig:cloudgamma} shows the filtering distributions for one initial ensemble over time of the parameter $\gamma_{3,t}$, which can be viewed as quantifying the amount of movement of the cloud intensities ``one pixel to the right'' from one time point to the next. This rightward drift, which is clearly present in the data (see Figure \ref{fig:clouddata}), is picked up fairly quickly by the true and PEnKF filtering distributions, while the EARBPF approximation even implies a negative $\gamma_{3,t}$ between $t=25$ and $t=30$.


\subsection{Smoothing inference for the Lorenz-96 model \label{sec:genksex}}

Finally, we considered smoothing inference for simulated data based on the popular nonlinear Lorenz-96 model \citep{Lore:96} 
{\color{black}
described in Example \ref{ex:lorenz}, with $\bQ = 0.2\bfSigma_L$, where $\bfSigma_L$ is the covariance matrix obtained by a long run of the $\lor_{8,0.2}$ model. Starting with $\bfmu_{1|0} = \bfzero$ and $\bfSigma_{1|0} = \bfSigma_L$, we simulated data from the model 
for $t=1,\ldots,T$ with $T=10$, assuming $\theta \sim \normal(\mu_\theta,\sigma_\theta^2)$ with $\mu_\theta=0.8$ and $\sigma_\theta=0.2$.
The Lorenz-96 equations were solved numerically using an Euler scheme.}

We simulated 100 such datasets, with the true $\theta$ drawn for each simulation from the prior. For each dataset, we ran three different methods to obtain approximations to the smoothing distribution \textcolor{black}{$p(\theta,\bx_{1:T}|\by_{1:T})$} using a starting value of $\theta=0.5$. The first method considered was the EnKS in Algorithm \ref{alg:enks} with $N= $ 1,000 and state augmentation; that is, $\theta$ was treated as part of the state vector $\bx_t$ with artificial evolution distribution $\theta_t | \theta_{t-1} \sim \normal(\theta_{t-1},0.1^2)$. We also considered the GEnKS in \textcolor{black}{Example \ref{ex:lorenzgenks}}. For both EnKS algorithms with used a taper over time and space given by a Wendland correlation function with radius 3 and 8, respectively. Finally, we considered the particle Gibbs sampler \citep[][Sect.~2.4.3]{Andrieu2010}, \textcolor{black}{which uses the same distribution as the EnKS in \eqref{eq:lorenzfcd} to update $\theta$, but uses a ``conditional'' particle filter to sample from the FCD of the state $\bx_{1:T}$ given the data and $\theta$.} For both the GEnKS and the particle Gibbs sampler, we used an ensemble size of $N=50$ and 100 Gibbs iterations, the first 20 of which were considered burn-in.

Due to the nonlinear evolution, it is not possible to obtain the true smoothing distribution of the states and parameter in any reasonable computation time. Hence, we compared the three methods using the same proper scoring rules as in Section \ref{sec:nongaussobs}, averaged over the 100 simulated datasets. The results are summarized in Table \ref{tab:lorenz}. Our GEnKS algorithm performed best both in terms of inference on the parameter and on the states. The state augmentation method failed, leading to very high (i.e., bad) scores for the posterior on $\theta$. The particle Gibbs sampler produced worse inference on $\theta$ than simply using the prior distribution (i.e., completely ignoring any information in the data). Some of the results are also summarized in Figure \ref{fig:lorenz}, which shows the \emph{collapse} of the particle Gibbs procedure for inference on $\bx_{1:T}$ and its non-convergence in terms of inference on $\theta$. 

\begin{table}[ht]
\centering
\begin{tabular}{r|rr|rr}
 & \multicolumn{2}{c|}{Parameter $\theta$} & \multicolumn{2}{c}{State $\bx_{1:T}$}  \\ 
 & MSPE & CRPS & MSPE & CRPS \\ 
  \hline
\textbf{GEnKS} & \textbf{0.002} & \textbf{0.024} & \textbf{0.710} & \textbf{0.478} \\ 
  EnKS+SA & $>$100 & $>$100 & 0.914 & 0.540 \\ 
  Particle Gibbs & 0.111 & 0.262 & 12.380 & 2.495 \\ 
  Prior & 0.042 & 0.118 &  &  \\ 
\end{tabular}
\caption{\label{tab:lorenz}Mean squared prediction error (MSPE) and continuous ranked probability score (CRPS) for smoothing inference on the parameter $\theta$ and states $\bx_{1:T}$ for the Lorenz-96 simulations in Section \ref{sec:genksex}}
\end{table}

Note that while we considered $\theta$ to be the only unknown parameter here for clearness of exposition, our GEnKS should also work well for unknown parameters in $\bH_t$ and scalar multiplicative parameters in $\bQ_t$ and $\bR_t$, in that closed-form FCDs are available in both of these cases if a normal and inverse-gamma priors, respectively, are assumed.

\begin{figure}
	\begin{subfigure}{.48\textwidth}
	\centering
	\includegraphics[width =1\linewidth]{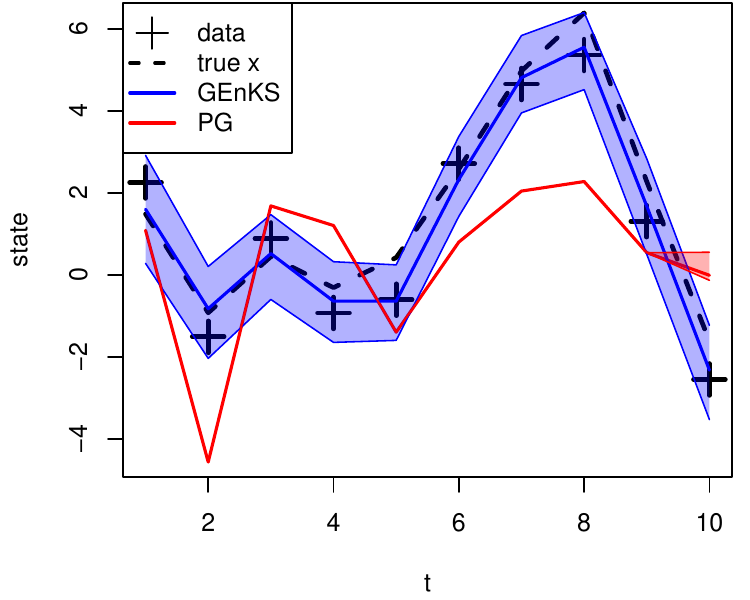}
	\caption{State at location 1 over time (i.e., $\bx_{1:T,1}$)}
	\label{fig:lorloc}
	\end{subfigure}%
\hfill
	\begin{subfigure}{.48\textwidth}
	\centering
	\includegraphics[width =1\linewidth]{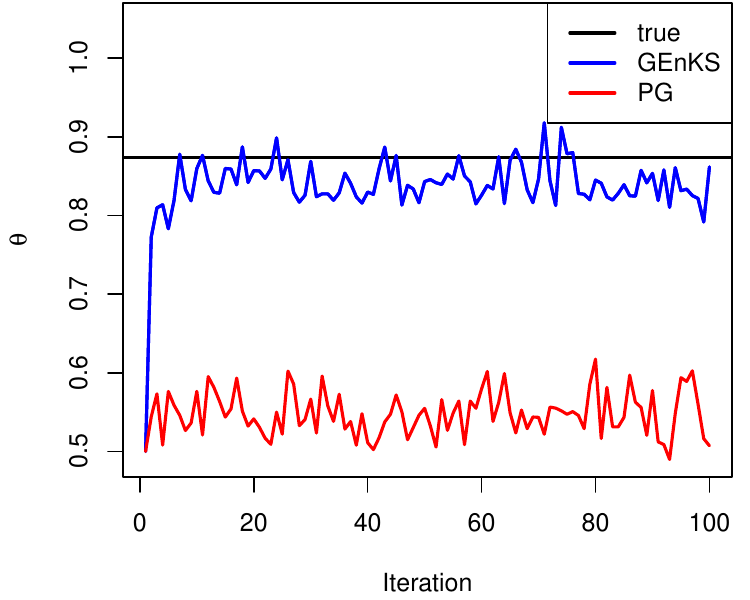}
	\caption{Trace plots for $\theta$}
	\label{fig:lortrace}
	\end{subfigure}%
\caption{For one of the Lorenz-96 simulations in Section \ref{sec:genksex}, true values and posterior distributions (i.e., posterior means and pointwise 80\% prediction intervals) of the state at a single location (Panel (\subref{fig:lorloc})), and trace plots for $\theta$ (Panel (\subref{fig:lortrace}))}
\label{fig:lorenz}
\end{figure}

\section{\textcolor{black}{Conclusions} \label{sec:conclusions}}

\textcolor{black}{We have introduced a new class of ensemble filtering and smoothing algorithms for Bayesian inference in high-dimensional dynamic spatio-temporal models.  The algorithms combine existing approaches from the state-space literature with ensemble Kalman methods from geophysics.   The main advantage of our algorithms is that they scale well to high-dimensional models, and they allow for nonlinearity, non-Gaussian observations, and unknown parameters.  In contrast, most existing state-space algorithms break down in high dimensions, and most ensemble Kalman approaches are poorly suited for non-Gaussian models with unknown parameters. We illustrated the proposed methods with four examples: two non-Gaussian models, a nonlinear geophysical system, and a real data example of cloud motion. We showed that our methods outperform the main competing approaches, including EnKFs, particle filters, and particle MCMC. Our methods also provide an approach for approximate Bayesian inference for large spatial data, in the special case of only $T=1$ time point.}

\textcolor{black}{
We have proposed two types of algorithms: {\em Ensemble Gibbs} and {\em Ensemble Marginalization} methods.  For Ensemble Gibbs methods, the EnKF is used for approximate simulation from the full conditional state distribution.  In our examples, we show that these approximate Gibbs samplers can provide accurate state and parameter inference and converge within a small number of iterations.  For Ensemble Marginalization methods, the EnKF is used to approximately integrate out the states and simulate from their full conditional posterior.  In this context, we have studied the theoretical properties of EnKF-based likelihood estimates and shown that an ensemble size that is linear in the effective dimension is sufficient to bound the variance of the estimated loglikelihood.  This is in contrast to particle-filter-based likelihood estimates, for which the required number of particles grows exponentially in the effective dimension. We leave as future research further study on the theoretical properties of the proposed algorithms.}

\textcolor{black}{
Statisticians now acknowledge that fast, approximate methods are necessary to analyze very large datasets.  The EnKF, developed in the geophysics community, has been successfully applied for fast, approximate inference on models with millions of state variables.   We believe that combining standard statistical approaches with these approximate but scalable EnKF methods provides solutions to problems that might be otherwise unsolvable.   }

\small
\appendix

\section*{Acknowledgments}

Katzfuss' research was partially supported by National Science Foundation (NSF) Grant DMS--1521676 and NSF CAREER Grant DMS--1654083. Wikle acknowledges the support of NSF grant SES-1132031, funded through the National Science Foundation Census Research Network program. We would like to thank Fuqing Zhang, Amir Nikooienejad, \textcolor{black}{the Associate Editor, and two anonymous referees for helpful comments and suggestions.}

\section{Proofs \label{app:proofs}}

\begin{proof}[Proof of Proposition \ref{prop:pflik}]
For notational simplicity, we drop the $\widetilde~$ on $\forex^{(j)}$, \textcolor{black}{and we assume $\bfmu =\bfzero$ without loss of generality}. First, note that $\sum_{i=1}^n (x_i^{(j)}-y_i)^2$ follows a noncentral $\chi^2$-distribution, and so
\[
\textstyle\log \plik(\by|\bftheta,\bx^{(j)})=-\frac{n}{2}\log(2\pi\theta) - \frac{1}{2\theta}\sum_{i=1}^n (x_i^{(j)}-y_i)^2 \stackrel{d}{\rightarrow}\normal(na,nb),
\]
where $a=-\frac{1}{2}(\log(2\pi\theta)+\kappa+\frac{1}{n}\sum_{i=1}^n y_i^2)$ and $b=2\kappa(\kappa+\frac{2}{n}\sum_{i=1}^n y_i^2)$. Using properties of the lognormal distribution, we have
\[
\textstyle \plik(\by|\bftheta,\bx^{(1:N)})= \frac{1}{N} \sum_{j=1}^N \exp( \log \plik(\by|\bftheta,\bx^{(j)})) \stackrel{d}{\rightarrow} \normal(e^{n(a+b/2)},\frac{1}{N}e^{n2(a+b)}),
\]
and finally, using the delta-method, we have the asymptotic variance
\[
 \textstyle\var \log \plik(\by|\bftheta,\bx^{(1:N)}) \stackrel{a}{=} \frac{1}{N}e^{n2(a+b)}e^{-2n(a+b/2)} = \frac{1}{N}e^{nb}=\frac{1}{N}e^{n2\kappa(\kappa+(2/n)\sum_{i=1}^n y_i^2)} = \mathcal{O}(e^n/N).
\]
\end{proof}

\begin{proof}[Proof of Proposition \ref{prop:enkflik}]
\textcolor{black}{We drop the $\widetilde~$ on $\forex^{(j)}$, and assume $\bfmu =\bfzero$ without loss of generality.} Due to the diagonal tapering, the covariance matrix in the likelihood is a diagonal matrix with $i$th element 
\[
\textstyle\hat\sigma^2_i = \frac{1}{N-1} \sum_{j=1}^N (x_i^{(j)} - \hat\mu_i)^2 + \theta \; \stackrel{d}{\rightarrow}\; \normal(\sigma^2,2\kappa^2/N),
\]
where $\sigma^2 = \kappa+\theta$ and $\hat\mu_i = \frac{1}{N}\sum_{j=1}^N x_i^{(j)}$. Because $\var(\hat\mu_i^2) = \mathcal{O}(1/N^2)$, the variance of the mean estimator is negligible relative to that of $\hat\sigma^2_i$, and we regard $c_i = (y_i - \hat\mu_i)^2$ as a constant. Thus, by the delta method,
\[
\textstyle\var(\log\elik(y_i|\theta,x_i^{(1:N)})) =\var(\frac{1}{2}\log(\hat\sigma^2_i) + \frac{1}{2}\frac{c_i}{\hat\sigma^2_i}) \stackrel{a}{=} \frac{1}{4} \frac{2\kappa^2}{N}(\frac{1}{\sigma^2} - \frac{c_i}{\sigma^4})^2.
\]
Finally, we have $\log\elik(\by|\theta,\bx^{(1:N)}) = \sum_{i=1}^n \log\elik(y_i|\theta,x_i^{(1:N)})$, and so $\var(\log\elik(\by|\theta,\bx^{(1:N)})) = \mathcal{O}(n/N)$.
\end{proof}

{\color{black}
\begin{proof}[Proof of Proposition \ref{prop:convergence}]
For linear Gaussian SSMs, the EnKF converges to the Kalman filter as $N \rightarrow \infty$ \citep[e.g.,][]{Butala2008,Mandel2011}, as long as $\widehat{\bfSigma}_{t|t-1}$ is a consistent estimator of the forecast covariance matrix. As the EnKS can be viewed as an EnKF on an extended state-space, the same result holds for the EnKS. As stated in Section \ref{sec:elik}, the EnKF likelihood $\elik_t$ also converges to the true likelihood at each time $t$. Combining this with convergence results for MCMC \citep[e.g.,][]{Robert2004} and SMC \citep[e.g.,][]{Doucet2001} methods under mild regularity conditions, it is clear that the PEnKF, GEnKS, MHEnKS, and PEnKS all converge to the true filtering or smoothing distributions in a HSSM with linear evolution operators $\evol_t$, as $N \rightarrow \infty$ and $M$ or the number of MCMC iterations increase. 
\end{proof}
}

\footnotesize
\bibliographystyle{apalike}
\bibliography{../enkfbib}

\end{document}